\begin{document}
\def\mytitle{Reordering Rows for Better Compression: Beyond the Lexicographic Order}
\markboth{D. Lemire et al.}{\mytitle}
\title{\mytitle}
\author{DANIEL LEMIRE
\affil{TELUQ}
OWEN KASER
\affil{University of New Brunswick}
EDUARDO GUTARRA
\affil{University of New Brunswick}}

\begin{abstract}
Sorting database tables before compressing them improves
the compression rate. Can we do better than the lexicographical
order?
For minimizing the number of runs in a run-length encoding compression scheme, 
the best approaches to row-ordering are derived from 
traveling salesman heuristics, although there is a significant trade-off between running time and compression.
A new heuristic, \textsc{Multiple Lists}, which is a variant on \textsc{Nearest Neighbor} that trades off  compression for a major running-time speedup, is a good option for very large tables. 
However, for some compression schemes, it is more important to generate long
runs rather than few runs. For this case, another novel heuristic, \textsc{Vortex},
is promising.  
We find that we can improve run-length encoding up to a factor of 3 whereas we can improve prefix coding by up to 80\%: these gains are on top of the gains due to 
 lexicographically sorting the  table. We  prove that the new row reordering is optimal (within 10\%) at minimizing the runs of identical values within columns,  in a few cases.
\end{abstract}
\category{E.4}{Coding and Information Theory}{Data compaction and compression}%
 \category{H.4.0}{Information Systems Applications}{General}

\terms{Algorithms, Experimentation, Theory}

 \keywords{Compression, Data Warehousing, Gray codes, Hamming Distance, Traveling Salesman Problem}
 
\acmformat{Lemire, D., Kaser, O., Gutarra, E. 2012. Reordering rows for better compression: beyond the lexicographic order.} 

\maketitle 
   
\begin{bottomstuff}
This work is supported by Natural Sciences and Engineering Research Council of Canada 
grants 261437 and 155967 and a Quebec/NB Cooperation grant.

Author's addresses: D. Lemire, LICEF Research Center, TELUQ; 
O. Kaser  {and} E. Gutarra, Dept.\ of CSAS, University of New Brunswick, Saint John.
\end{bottomstuff}


 \section{Introduction}

Database compression reduces storage while  improving the performance of
some queries. 
It is commonly recommended to sort tables
to improve
the compressibility of the tables~\cite{Poess:2003:DCO:1315451.1315531} or of the indexes~\cite{arxiv:0901.3751}.
While it is not always possible or useful to sort and compress tables, 
sorting is a critical component of some column-oriented architectures~\cite{abadi2008column,1453912}

At the simplest level, we model compressibility by counting
runs of identical values within columns. 
Thus, we want to reorder rows to minimize the total number
of runs, in all columns (\S~\ref{sec:minimizing-runs}). 
The lexicographical order is the most common row-reordering heuristic for this problem.

Can we beat the lexicographic order? Engineers might be willing to spend extra time reordering rows---even for
modest gains (10\% or 20\%)---if the computational cost is acceptable. Indeed, popular compression utilities such as bzip2 are often several times slower than faster alternatives (e.g., gzip) for similar gains.

Moreover, minimizing the number of runs is of theoretical interest.
Indeed, it  reduces to the Traveling
Salesman Problem (TSP) under the Hamming distance---an NP-hard 
problem~\cite{258541,ernvall1985np} (\S~\ref{sec:runminandtsp}).
Yet there have been few attempts to design and study TSP heuristics with the Hamming 
distance 
and even fewer
 on large data sets. 
For the generic TSP, there are several well known heuristics (\S~\ref{sec:tsp-heuristics}),
as well as strategies to scale them up (\S~\ref{sec:horizontalpartition}).
Inspired by these heuristics, we introduce the novel \textsc{Multiple Lists} heuristic (\S~\ref{sec:approx-nn}) which is designed with the Hamming distance and scalability in mind.

While counting runs is convenient, 
it is an incomplete model.
Indeed, several compression algorithms for databases may be more effective when there are many ``long runs''~(\S~\ref{sec:blockwiseruns}).
Thus, instead of minimizing the number of runs of column values, we may seek to maximize
the number of long runs. We can then test the result with popular compression algorithms~(\S~\ref{sec:realistic-column-storage}). For this new problem, we propose two heuristics: 
   \textsc{Frequent-Component} (\S~\ref{sec:fc}), and \textsc{Vortex} (\S~\ref{sec:vortex}).
   \textsc{Vortex} is novel.

All our contributed heuristics have $O(n \log n)$ complexity when the number of columns
is a constant. However, \textsc{Multiple Lists} uses   
a linear number of random accesses,  making it prohibitive for very large tables: in such cases, 
we use table partitioning~(\S~\ref{sec:Partitioning}).

We can assess these TSP heuristics experimentally under the Hamming distance~(\S~\ref{sec:experiments}).
Using synthetic data sets (uniform and Zipfian), we find that   \textsc{Vortex} is a competitive
heuristic on  Zipfian data. It is one of the best
heuristics for generating long runs. 
Meanwhile, \textsc{Multiple Lists} offers a good compromise between speed and run minimization: it can
even surpass much more expensive alternatives. Unfortunately, it is poor at generating long runs.

Based on these good results, we apply \textsc{Vortex} and \textsc{Multiple Lists} to 
realistic tables, using various table-encoding techniques.
We show that on several data sets both \textsc{Multiple Lists} and
 \textsc{Vortex} can improve compression  when compared to the lexicographical order---especially if the column histograms have high statistical dispersion  (\S~\ref{sec:realexperiments}).

\section{Related work}

Many forms of compression in databases are susceptible to row reordering.
For example, to increase the compression factor, Oracle engineers~\cite{Poess:2003:DCO:1315451.1315531} recommend sorting the data
before loading it. Moreover, they recommend taking
into account the cardinality of the columns---that is, the number of distinct column values. 
Indeed, they indicate that sorting on
low-cardinality columns is more likely to increase the compression factor.
Poess and Potapov do not quantify the effect of row reordering. However, they report that compression gains on synthetic data are small (a factor of 1.4 on TPC-H) but can be much larger on real data (a factor of 3.1). The effect on performance varies from slightly longer running times to a speedup of 38\% on some queries. Loading times are doubled.

Column-oriented databases and indexes are particularly suitable
for compression. 
Column-oriented databases such as C-Store use the conventional (lexicographical)
sort to improve compression~\cite{1083658,1142548}.
Specifically, a given table is decomposed into several  overlapping projections (e.g., on columns 1,2,3 then on column 2,3,4) which are sorted and compressed. By choosing projections matching the query workload, it is possible to surpass a conventional DBMS by orders of magnitude. 
To validate their model, Stonebraker et al.\ used TPC-H with a scale factor of 10: this generated 60~million rows in the main table. They kept only attributes of type INTEGER and CHAR(1). On this data, they report a total space usage of 2\,GB compared to 2.7\,GB for an alternative column store. They have a 30\% storage advantage,  
and better performance,  
partly
because they sort their projections before compressing them.

\citeN{rlewithsorting} 
prove that sorting the projections on the low-cardinality column first  often maximizes compression.
They stress that picking the right column order is important as the compressibility could vary substantially (e.g., by a factor of 2 or 3). They consider various alternatives to the lexicographical order such as modular and reflected Gray-code orders or Hilbert orders, and find them ineffective.
In contrast, we propose  new heuristics  (\textsc{Vortex} and \textsc{Multiple Lists}) that can surpass the lexicographical order.
Indeed, 
when using a 
compression
technique such as Prefix coding (see \S~\ref{sec:bigsectionondict}),
Lemire and Kaser 
obtain
compression gains of more than 20\% due to sorting: using the  same compression technique, on the same data set, we report 
further gains of 21\%. \citeN{Pourabbas2012} 
extend the strategy by showing that columns with the same cardinality should be ordered from high skewness to low skewness.

The compression of bitmap indexes also greatly benefits from table sorting. In some experiments, the sizes of the bitmap indexes are reduced by nearly an order of magnitude~\cite{arxiv:0901.3751}. Of course, everything else being equal, smaller indexes tend to be faster.
Meanwhile, alternatives to the lexicographical order such as
\textsc{Frequent-Component}, reflected Gray-code or Hilbert orders are unhelpful on bitmap indexes~\cite{arxiv:0901.3751}. (We review an improved version of \textsc{Frequent-Component} in \S~\ref{sec:fc}.)

Still in the context of bitmap indexes,
\citeN{malik2007optimizing} 
get good compression results using a variation on the Nearest Neighbor TSP heuristic. Unfortunately, its quadratic time complexity makes the processing of large data sets difficult. 
To improve scalability, \citeN{malik2007optimizing} 
also propose a faster heuristic called  aHDO which we review in \S~\ref{sec:tsp-heuristics}.
In comparison, our novel \textsc{Multiple Lists} heuristic is also an attempt to get a more scalable  Nearest Neighbor heuristic.
Malik and Kender used
small 
data sets having between 204~rows and 34\,389~rows. All their compressed bitmap indexes were under 10\,MB\@. On their largest data set, the compression gain from aHDO was  14\% 
when compared
to the original  order. Sorting improved compression by 7\%, whereas their Nearest Neighbor TSP heuristic had the best gain at 17\%.
\citeN{pinar05} 
also present good compression results on bitmap indexes after reordering: on their largest data set (11\,MB), they report
using a Gray-code approach 
to get
a compression ratio of 1.64 compared to the original order. 
 Unfortunately, they do not compare with the lexicographical order.

Sometimes reordering all of the data before compression is not an option.
For example, Fusco et al.~\shortcite{netfli,springerlink:10.1007/s00778-011-0242-x} describe a system where
bitmap indexes must be compressed on-the-fly to index network traffic.
They report that their system can accommodate the insertion of more
than a million records per second.
To improve compressibility without sacrificing performance, they
cluster the rows using locality sensitive hashing~\cite{gionis1999similarity}. They report a compression factor of 2.7 due to this reordering (from 845\,MB to 314\,MB).

\section{Minimizing the number of runs}
\label{sec:minimizing-runs}

One of the primary benefits of column stores is the compression due to 
run-length encoding (RLE)~\cite{abadi2008column,brunoelephant,1453912}.
Moreover, the most popular bitmap-index compression techniques are variations
on RLE~\cite{DBLP:journals/tods/WuOS06}.

RLE is a compression strategy where runs of identical values are coded using
the repeated value and the length of the run. For example, the sequence
\texttt{aaaaabcc} becomes $5\times \texttt{a}, 1 \times \texttt{b}, 2 \times {c}$.
Counters may be stored
using a variable number of bits, e.g., using 
variable-byte coding~\cite{scholer2002cii,db2luw2009},
Elias delta coding~\cite{scholer2002cii}
or Golomb coding~\cite{golomb1966rle}.
Or we may store counters using a fixed number of bits for faster decoding.

RLE not only reduces the storage requirement: it also reduces the processing  time.
For example, we can compute the component-wise sum---or  indeed any $O(1)$~operation---of two RLE-compressed array in time proportional
to the total number of runs. In fact, we sometimes sacrifice compression in favor
of speed: 
\begin{itemize}
\item to help random access, we can add the row identifier to the run length and repeated value~\cite{1142548} so that $5\times \texttt{a}, 1 \times \texttt{b}, 2 \times {c}$ becomes $5\times \texttt{a} \textrm{ at 1}, 1 \times \texttt{b} \textrm{ at 6}, 2 \times {c} \textrm{ at 7}$; 
\item to simplify computations, we can forbid runs from different columns to partially overlap~\cite{brunoelephant}: unless two runs are disjoint as sets of row identifiers, then one must be a subset of the other; 
\item  to avoid the overhead of decoding too many counters, we may store single values or short runs verbatim---without
any attempt at compression~\cite{874730,DBLP:journals/tods/WuOS06}. \end{itemize}
Thus, instead of trying to model each form of RLE compression accurately, we only count the
 total number of runs (henceforth \textsc{RunCount}).

 Unfortunately, minimizing \textsc{RunCount} by row reordering is NP-hard~\cite{rlewithsorting,olke:rearranging-data-pods}.
 Therefore, we resort to heuristics. We examine many possible alternatives (see Table~\ref{tab:heur-megatable}). 
 
\begin{table}
\tbl{\label{tab:heur-megatable} Summary of heuristics considered and overall results. 
Not all methods were tested on realistic data; those not tested were 
either too inefficient for large data, or were clearly
unpromising after testing on Zipfian data. 
}{\footnotesize 
\renewcommand{\arraystretch}{1.2}
\begin{tabular}{lclp{1.23cm}p{1.15cm}} 
\multirow{2}{*}{Name}                         & \multirow{2}{*}{Reference}                     & \multirow{2}{*}{Described}                       & \multicolumn{2}{c}{Experiments} \\ 
                             &                               &                                 & Synthetic & Realistic \\ \hline 
\textsc{1-reinsertion}       &\cite{331562}                  &\S~\ref{sec:tsp-heuristics}     &  \S~\ref{sec:experiments}   &  ---\\\hdashline[1pt/1pt]
aHDO                         &\cite{malik2007optimizing}     &\S~\ref{sec:tsp-heuristics}     &    \S~\ref{sec:experiments}   & --- \\\hdashline[1pt/1pt]
\textsc{BruteForcePeephole}  & \textbf{novel}                          &\S~\ref{sec:tsp-heuristics}     &  \S~\ref{sec:experiments}   &  --- \\\hdashline[1pt/1pt]
\hspace{-1.1em}\renewcommand{\arraystretch}{0.5}
\begin{tabular}{l}\rule{0cm}{0.8em}\textsc{Farthest Insertion},\\\textsc{Nearest Insertion},\\\rule{0cm}{0.8em}\textsc{Random Insertion}\end{tabular}\renewcommand{\arraystretch}{1.0}
   &\multirow{1}{*}{\cite{rosenkrantz1977analysis}} &\multirow{1}{*}{\S~\ref{sec:tsp-heuristics}}     &   \multirow{1}{*}{\S~\ref{sec:experiments}}  &\multirow{1}{*}{---}  \\
\hdashline[1pt/1pt]
\textsc{Frequent-Component}  &\cite{arxiv:0901.3751}         &\S~\ref{sec:fc}                 &   \S~\ref{sec:experiments} & --- \\\hdashline[1pt/1pt]
\textsc{Lexicographic Sort}  &   ---                            &\S~\ref{sec:minimizing-runs}    & \S~\ref{sec:experiments}    & \S~\ref{sec:runsonrealisticdatabaset}\\\hdashline[1pt/1pt]
\textsc{Multiple Fragment}  &\cite{bentley1992fast}         &\S~\ref{sec:tsp-heuristics}     & \S~\ref{sec:experiments}   & --- \\\hdashline[1pt/1pt]
\textsc{Multiple Lists}      & \textbf{novel}                           &\S~\ref{sec:approx-nn}          &  \S~\ref{sec:experiments}   &  \S~\ref{sec:runsonrealisticdatabaset} \\\hdashline[1pt/1pt]
\textsc{Nearest Neighbor}    & \cite{bellmore1968traveling}                            &\S~\ref{sec:tsp-heuristics}     &   \S~\ref{sec:experiments} & --- \\\hdashline[1pt/1pt]
\textsc{Savings}             &\cite{clarke1964scheduling}    &\S~\ref{sec:tsp-heuristics}     &   \S~\ref{sec:experiments}   & --- \\\hdashline[1pt/1pt]
\textsc{Vortex}              & \textbf{novel}                           &\S~\ref{sec:vortex}             & \S~\ref{sec:experiments} &  \S~\ref{sec:runsonrealisticdatabaset} \\
\end{tabular}
\renewcommand{\arraystretch}{1.0}
}
\end{table}

An effective heuristic for the \textsc{RunCount} minimization problem is to sort
the rows in lexicographic order. In the lexicographic order,
the first 
component where two tuples differ ($a_j\neq b_j$ but $a_i=b_i$ for $i<j$)
determines which tuple is smaller.

There are alternatives to the lexicographical order.  A Gray code is an ordered list of tuples such that the Hamming distance between successive tuples
is one.\footnote{For a more restrictive definition, we can  replace the Hamming distance by the Lee metric~\cite{1312181}.} The Hamming distance is the number of different components between two same-length tuples, e.g., 
 \begin{eqnarray*}d(\ (a,b,y), (a,d,x)\ )=2.\end{eqnarray*}
The Hamming distance is a metric: 
 i.e., $d(x,x)=0$, $d(x,y)=d(y,x)$ and $d(x,y)+d(y,z)\geq d(x,z)$.
A Gray code over all possible tuples generates an order (henceforth a Gray-code order): $x<y$ whenever $x$ appears before $y$ in the Gray code. 
For example, we can use the mixed-radix reflected 
Gray-code order~\cite{richards1986dca,KnuthV4A} (henceforth Reflected GC). 
Consider a two-column table with column cardinalities $N_1$ and $N_2$. We label the column values from
1 to $N_1$ and 1 to $N_2$.
Starting with the tuple $(1,1)$, we generate all tuples in Reflected GC order by the following algorithm:
\begin{itemize}
\item If the first component is odd then if the second component is less than $N_2$, increment it, otherwise increment the first component.
\item If the first component is even then if the second component is greater than 1, decrement it, otherwise increment the first component.
\end{itemize}
E.g., the following list is in  Reflected GC order:
\begin{eqnarray*}(1,1), (1,2), \dots, (1,N_2), (2,N_2), (2,N_2-1), \dots, (2,1), (3,1), \dots\end{eqnarray*}
The generalization to more than two columns is straightforward. Unfortunately, the benefits of Reflected GC compared to the lexicographic order are small~\cite{malik2007optimizing,rlewithsorting,arxiv:0901.3751}.

We can bound the optimality of lexicographic orders using only the number of rows
and the cardinality of each column. Indeed, for the problem of minimizing \textsc{RunCount} by row reordering, lexicographic sorting
 is $\mu$-optimal~\cite{rlewithsorting} for a table with $n$~distinct rows and column cardinalities $N_i$ for $i=1,\ldots,c$
with 
\begin{eqnarray*}
\mu  =   \frac{\sum_{j=1}^c \min(n,\prod_{k=1}^j N_k) }{n+c-1}.
\end{eqnarray*}
 To illustrate this formula, consider
a table 
with
1~million distinct rows and four columns having cardinalities 10, 100, 1000, 10000. Then, we have
$\mu \approx 2$ which means that lexicographic sorting is 2-optimal. To apply this formula in practice,
the main difficulty might be to determine the number of distinct rows, but there are good approximation algorithms~\cite{aouiche2007cfp,1807094}. 
We can improve the bound $\mu$ slightly:

\begin{lemma}\label{lemma:omegabound}For the \textsc{RunCount} minimization problem, sorting the table lexicographically is $\omega$-optimal for
\begin{eqnarray*}
\omega & = & \frac{\sum_{i=1}^{c} n_{1,i}}{n+c-1}
\end{eqnarray*}
where $c$ is the number of columns, $n$ is the number of distinct rows, and $n_{1,j}$ is the number of distinct rows when
considering only the first $j$~columns (e.g., $n=n_{1,c}$).
\end{lemma}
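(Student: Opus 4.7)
The plan is to bound the \textsc{RunCount} of the lexicographic ordering from above by $\sum_{i=1}^{c} n_{1,i}$, bound the optimum \textsc{RunCount} from below by $n+c-1$, and divide. The central tool is the identity that for any row ordering $r_1,\ldots,r_N$,
\begin{eqnarray*}
\text{RunCount} \;=\; c \;+\; \sum_{i=1}^{N-1} d(r_i, r_{i+1}),
\end{eqnarray*}
where $d$ is the Hamming distance: each column starts with one run, and every subsequent row introduces one additional run per column in which it disagrees with its predecessor.

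For the universal lower bound, I would note that with $n$ distinct tuples among the $N$ rows, at least $n-1$ of the consecutive pairs $(r_i,r_{i+1})$ must consist of two different tuples (duplicate neighbors contribute Hamming distance $0$, and one still has to pass between each pair of distinct tuple types at least once). Each of those pairs contributes at least $1$ to the Hamming sum, so the identity yields $\text{RunCount} \geq n + c - 1$ for every ordering, and in particular for the optimum.

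For the numerator, I would show that after lexicographic sorting the number of runs in column $j$ is at most $n_{1,j}$. Partition the sorted rows into contiguous blocks sharing a common $(j-1)$-prefix. Within each block the rows are sorted on column $j$, so the runs of column $j$ inside the block are in bijection with the distinct column-$j$ values appearing there; summing over blocks counts exactly the distinct $j$-prefixes, which is $n_{1,j}$. Adjacent blocks may happen to share a column-$j$ value at the seam and merge runs, so $n_{1,j}$ is only an upper bound. Summing over $j$ gives $\text{RunCount}_{\text{lex}} \leq \sum_{i=1}^{c} n_{1,i}$, and the ratio with $n+c-1$ is $\omega$.

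The most delicate point I expect is the bookkeeping around duplicates and block boundaries: one must check that the $n-1$ lower bound on nonzero Hamming steps holds for any interleaving of duplicate rows, and that seam merges between adjacent $(j-1)$-prefix blocks only ever decrease the per-column run count. Both facts reduce to the observation that consecutive identical entries contribute nothing to either side of the identity, but the argument should be stated with care to avoid off-by-one slips at $j=1$ and at the first row.
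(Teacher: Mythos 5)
Your proposal is correct and follows exactly the same route as the paper's (very terse) proof: the universal lower bound of $n+c-1$ runs for any ordering, the per-column upper bound of $n_{1,j}$ runs under lexicographic order, and the ratio. The extra detail you supply on the duplicate/block-boundary bookkeeping is sound and simply fills in what the paper leaves implicit.
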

\begin{proof}
Irrespective of the order of the rows, there are at least $n+c-1$~runs.
Yet, under the lexicographic order, there are no more than $n_{1,i}$~runs in the $i^{\mathrm{th}}$~column. The result follows.
\end{proof}

The bound  $\omega$ is tight.
Indeed, consider a table with $N_1, N_2, \ldots, N_c$~distinct
values in columns $1, 2, \ldots, c$ and such that it has
$n=N_1 N_2 \dots N_c$~distinct rows. The lexicographic order
will generate $ N_1 + N_1 N_2 + \dots + N_1 N_2 \dots N_c$~runs.
In the notation of Lemma~\ref{lemma:omegabound}, there are 
$\sum_{i=1}^{c} n_{1,i}$~runs. However, we can also order
the rows so that there are only $n+c-1$~runs by using the Reflected GC order. 

We have that $\omega$ is bounded by the number of columns. That is, we have that $1\leq \omega \leq  c$.
Indeed, we have that $n_{1,c}=n$ and $n_{1,i}\geq 1$ so that $\sum_{i=1}^{c} n_{1,i}\geq {n+c-1}$ and therefore  $ \omega = \frac{\sum_{i=1}^{c} n_{1,i}}{n+c-1}\geq 1$. 
We also have that $n_{1,i}\leq n$ so that  $ \sum_{i=1}^{c} n_{1,i}\leq c n \leq   c (n+c-1)$ and hence  $ \omega = \frac{\sum_{i=1}^{c} n_{1,i}}{n+c-1}\leq c$. 
 In practice, the bound $\omega$ is often  larger when $c$ is larger (see \S~\ref{sec:Datasets}).

\subsection{Run minimization and TSP}
\label{sec:runminandtsp}

There is much literature about the TSP, including approximation
algorithms and many heuristics, but our run-minimization problem is not
\emph{quite} the TSP: it more resembles 
a minimum-weight Hamiltonian path
problem because we do not complete the cycle~\cite{springerlink:10.1007/3-540-44469-6_34}.  In order to use known
TSP heuristics, we need a reduction from our problem to TSP\@.  In
particular, we
reduce the run-minimization problem to TSP over the Hamming distance $d$. 
Given the rows $r_1, r_2, \dots, r_n$, \textsc{RunCount} for $c$~columns
is given by the sum of the Hamming distance between the successive rows,
\begin{eqnarray*}
c+\sum_{i=1}^{n-1} d(r_i,r_{i+1}).
\end{eqnarray*}
Our goal is to minimize  $\sum_{i=1}^{n-1} d(r_i,r_{i+1})$.
Introduce an extra row $r_{\star}$ with the property
that $d(r_{\star},r_i) = c$ for any $i$. We can achieve
the desired result under the Hamming distance by filling in 
the row $r_{\star}$ with values that do not appear
in the other rows.
We
solve the TSP over this extended set ($r_1, \ldots, r_n,r_{\star}$) by finding a  reordering of the elements
($r'_1, \ldots, r'_n,r_{\star}$)
minimizing the sum of the Hamming distances between successive rows:
\begin{eqnarray*}
d(r'_n,r_{\star})+ d(r_{\star},r'_1) + \sum_{i=1}^{n-1} d(r'_i,r'_{i+1})  =
2c +  \sum_{i=1}^{n-1} d(r'_i,r'_{i+1}).
\end{eqnarray*} 
Any reordering minimizing $2c +  \sum_{i=1}^{n-1} d(r'_i,r'_{i+1})$
also minimizes $\sum_{i=1}^{n-1} d(r'_i,r'_{i+1})$.
Thus, we have reduced the minimization of \textsc{RunCount} by row reordering to TSP\@.
Heuristics for TSP can
now be employed for our problem---after finding a tour 
($r'_1, \ldots, r'_n,r_{\star}$), we order the table rows as
$r'_1, r'_2, \ldots , r'_n$.

Unlike the general TSP, we know of 
linear-time $c$-optimal heuristics when using the Hamming distance. 
An ordering is \emph{discriminating}~\cite{cai1995umd} if duplicates are listed consecutively. 
By constructing a hash table, we can generate a discriminating order in expected linear time. It is sufficient for $c$-optimality.

\begin{lemma}\label{lemma:discriminating}
Any discriminating row ordering is $c$-optimal for the \textsc{RunCount} minimization problem.
\end{lemma}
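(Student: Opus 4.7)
The plan is to bound the \textsc{RunCount} of any discriminating order from above, bound the optimal \textsc{RunCount} from below, and then take the ratio.

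First I would exploit the defining property of a discriminating order: duplicate rows are placed contiguously, so they contribute nothing to the Hamming-distance sum within a duplicate block. Hence if $r'_1,\ldots,r'_n$ are the $n$ distinct rows listed in the order they first appear, the total \textsc{RunCount} equals exactly $c+\sum_{i=1}^{n-1} d(r'_i,r'_{i+1})$, the same expression one obtains from the TSP reduction in \S~\ref{sec:runminandtsp} applied to the distinct rows only. Since any two rows differ in at most $c$ columns, each term $d(r'_i,r'_{i+1})$ is at most $c$, giving the upper bound
\begin{eqnarray*}
\textsc{RunCount} \;\leq\; c + (n-1)c \;=\; nc.
\end{eqnarray*}

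Next I would invoke the lower bound from the proof of Lemma~\ref{lemma:omegabound}: any row ordering yields at least $n+c-1$ runs, since $d(r_i,r_{i+1})\geq 1$ between any two distinct consecutive rows. In particular, the optimum \textsc{RunCount} satisfies $\mathrm{OPT}\geq n+c-1\geq n$. Combining the two bounds,
\begin{eqnarray*}
\frac{\textsc{RunCount}}{\mathrm{OPT}} \;\leq\; \frac{nc}{n+c-1} \;\leq\; \frac{nc}{n} \;=\; c,
\end{eqnarray*}
which is exactly $c$-optimality.

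There is no real obstacle here: the entire content sits in the observation that a discriminating order reduces the problem to the $n$ distinct rows and that the per-step Hamming distance is trivially capped by $c$. The only subtlety worth spelling out is that the $n+c-1$ lower bound, although stated in Lemma~\ref{lemma:omegabound} for distinct rows, applies in general because the minimum \textsc{RunCount} depends only on the set of distinct rows (any optimal order can be assumed discriminating without loss of generality).
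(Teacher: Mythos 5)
Your proof is correct and follows essentially the same route as the paper's: the paper likewise observes that a discriminating ordering has at most $nc$ runs (since each of the $n-1$ transitions between distinct rows costs at most $c$) and that any ordering has at least $n$ runs, then takes the ratio. Your version merely spells out the intermediate steps (the collapse to distinct rows and the $n+c-1$ lower bound) that the paper leaves implicit.
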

\begin{proof}
If $n$ is the number of distinct rows, then a discriminating row ordering
has at most $nc$~runs. Yet any ordering  generates at least $n$~runs. This proves the result. 
\end{proof}

Moreover---by the triangle inequality---there is a discriminating row order minimizing the number of runs.
In fact,  given any row ordering we can construct a
discriminating row ordering with a lesser or equal cost $\sum_{i=1}^{n-1} d(r_i,r_{i+1})$ because of the triangle inequality.
Formally, suppose that we have a non-discriminating
order $r_1, r_2, \ldots,r_n$. We can find
two identical tuples ($r_k=r_j$) separated
by at least one different tuple ($r_{k+1} \neq r_j$).
Suppose $j<n$. If we move $r_j$ between
$r_k$ and $r_{k+1}$, the cost $\sum_{i=1}^{n-1} d(r_i,r_{i+1})$
will change by 
$ d(r_{j-1},r_{j+1}) - ( d(r_{j-1},r_j)+d(r_{j},r_{j+1}))$: a  quantity 
at most
zero by the triangle inequality.
If $j=n$, the cost will change by $ - d(r_{j-1},r_j)$, another non-positive quantity.
We can repeat such moves until the new order is discriminating, which proves the result.

\subsection{TSP heuristics}
\label{sec:tsp-heuristics}

We want to solve  TSP instances with the Hamming distance.
For such metrics, one of the earliest and still unbeaten TSP heuristics is the 1.5-optimal Christofides algorithm~\cite{Christofides1976,Berman:2006:AT:1109557.1109627,Gharan:2011:RRA:2082752.2082897}.   
Unfortunately, it runs in $O(n^{2.5} (\log n)^{1.5})$ time~\cite{115366}
and even a quadratic running time would be 
prohibitive for our application.\footnote{Unless we explicitly include  the number of columns $c$ in the complexity analysis,  we consider it to be a constant.}

Thus, we consider faster alternatives~\cite{johnson2004,johnsonmcgeoch1997}. 
\begin{mylongitem}
\item Some heuristics are based on space-filling curves~\cite{platzman1989spacefilling}.
Intuitively, we want  to sort the tuples in the order in which they would appear on a curve visiting every
possible tuple. Ideally, the curve would be such that nearby points on the curve are also
nearby under the Hamming distance. In this sense, lexicographic orders---as well as the
\textsc{Vortex} order (see \S~\ref{sec:vortex})---belong to this class of heuristics even though they are not generally considered space-filling curves. Most of these heuristics run in time $O(n \log n)$. 
\item There are various  tour-construction heuristics~\cite{johnson2004}. These heuristics work by 
inserting, or appending, one tuple at a time in the solution. 
In this sense, they are greedy heuristics. They all begin with a randomly chosen starting tuple.
The simplest is \textsc{Nearest Neighbor}~\cite{bellmore1968traveling}:
  we append an available tuple, choosing one of those  
nearest to the last tuple added. 
  It runs in time $O(n^2)$
(see also Lemma~\ref{lemma:nnisnlogn}).
A variation is to   
also allow tuples to be inserted
at the beginning of the list or appended at the end~\cite{bentley1992fast}.
Another similar heuristic is \textsc{Savings}~\cite{clarke1964scheduling} which is reported to work
well with the Euclidean distance~\cite{johnson2004}.
A subclass of the tour-construction heuristics are the insertion heuristics: the selected tuple is inserted at the best possible
location in the existing tour. They differ in how they pick the tuple to be inserted:
\begin{itemize}
\item \textsc{Nearest Insertion}:  we pick a tuple nearest to a tuple in the tour. 
\item \textsc{Farthest Insertion}: we pick a tuple farthest from the tuples in the tour.
\item \textsc{Random Insertion}: we pick an available tuple at random. 
\end{itemize}
One might also pick a tuple whose cost of insertion is minimal, leading to an $O(n^2 \log n)$ heuristic. Both
this approach and \textsc{Nearest Insertion} are 2-optimal, but
the named insertion heuristics are in $O(n^2)$~\cite{rosenkrantz1977analysis}. 
There are many variations~\cite{kahng2004match}.
\item \textsc{Multiple Fragment} 
(or \textsc{Greedy}) is a bottom-up heuristic: initially, each tuple constitutes a fragment of
a tour, and fragments of tours are repeatedly merged~\cite{bentley1992fast}. 
The distance between fragments is computed by comparing the first and last tuples of both fragments. 
Under the Hamming distance, there is a  $c+1$-pass implementation strategy: first merge
fragments with Hamming distance zero, then merge fragments with Hamming distance one and so on. It runs in time $O(n^2 c^2)$. 
\item Finally, the last class of heuristics are those beginning with an existing tour. 
We continue trying to improve the tour until it is no longer possible or another stopping criteria is met.
 There are many ``tour-improvement techniques''~\cite{helsgaun2000effective,applegate2003chained}.
Several heuristics  break the tour and attempt to reconstruct a better one~\cite{croes1958method,lin1973effective,helsgaun2000effective,applegate2003chained}. 

\citeN{malik2007optimizing} 
propose the aHDO heuristic which  permutes successive tuples to improve the
solution. 
\citeN{pinar05} describe a similar scheme, where they consider permuting tuples that are not immediately adjacent, provided that they are not too far apart. 
\citeN{331562} 
repeatedly remove and reinsert (henceforth
\textsc{1-Reinsertion}) a single tuple at a better location. A variation is the \textsc{BruteForcePeephole}~heuristic: divide up the table into
small non-overlapping partitions of rows, and find the optimal solution that leaves the first and last row unchanged 
(that is, we solve a Traveling Salesman Path Problem (TSPP)~\cite{springerlink:10.1007/s10107-006-0046-8}).

\end{mylongitem}

\subsection{Scaling up the heuristics}
\label{sec:horizontalpartition}

External-memory sorting is  applicable to very large tables. 
However, even one of the fastest TSP heuristics (\textsc{Nearest Neighbor}) may fail to scale. We consider several strategies to alleviate this scalability problem.

\subsubsection{Sparse graph}
\label{sec:approx-nn}

 Instead of trying to solve the problem over a dense graph, where every tuple can follow any other tuple in the tour, we may construct a sparse graph~\cite{1744275,johnson2004clb}. 
 For example, the sparse graph might be constructed by limiting each tuple to  some of its near neighbors. 
A similar approach has also been used, for example, in the design of heuristics in weighted matching~\cite{48015}
and for document identifier assignment~\cite{1772723}.
In effect, we approximate the nearest neighbors.

We consider a similar strategy. Instead of storing a sparse graph structure, we store the table in several different orders. 
We compare rows only against other rows appearing consecutively in one of the lists.
Intuitively, we consider rows appearing consecutively in sorted lists to be approximate near neighbors~\cite{276876,gionis1999similarity,258656,301325,liu2004strong,276877}.
We implemented an instance of this strategy (henceforth \textsc{Multiple Lists}).

Before we formally describe the \textsc{Multiple Lists} heuristic, consider the example
given in  Fig.~\ref{fig:multiplylinkedlistexample}. Starting 
from an initial table (Fig.~\ref{fig:multiplylinkedlistexample-a}), we sort
the table lexicographically with the first column as the primary key: this
forms a list which we represent as  solid 
edges in the graph of Fig.~\ref{fig:multiplylinkedlistexample-c}. Then, we re-sort the table, 
this time using the second column as the primary key: this forms a second list
which we represent as dotted edges in Fig.~\ref{fig:multiplylinkedlistexample-c}.
Finally, starting from one particular row (say 1,3), we can greedily pick a
nearest neighbor (say 3,3) within the newly created sparse graph. We repeat
this process iteratively (3,3 goes to 5,3 and so on) until we have the solution
given in Fig.~\ref{fig:multiplylinkedlistexample-b}.

Hence, to apply \textsc{Multiple Lists} we
pick several different ways to sort the table. For each table order, we store the result in a dynamic data structure so that
rows can be selected in  order and removed quickly. (Duplicate rows can be stored once if we keep track of their frequencies.) 
One implementation strategy uses a multiply-linked list. Let $K$ be the number of different table orders. Add to each row room for $2K$~row pointers.
First sort the row in the first order. With pointers, link the successive rows, as in a doubly-linked list---using 2~pointers per row. Resort the rows in the second
order. Link successive rows, using another 2~pointers per row. 
Continue until all $K$ orders have been processed and every row has $2K$ pointers.
Removing a row in this data structure requires the modification of up to
$4K$~pointers.

\begin{figure}
\begin{centering}
\subfloat[Initial table\label{fig:multiplylinkedlistexample-a}]{%
	\hspace*{1cm}
\begin{tabular}{cc}
1&3\\
2&1\\
2&2\\
3&3\\
4&1\\
4&2\\
5&3\\
6&1\\
6&2\\
7&4\\
8&3
\end{tabular}
\hspace*{1cm}
}
\subfloat[Possible solution\label{fig:multiplylinkedlistexample-b}]{%
\hspace*{1cm}
\begin{tabular}{cc}
1&3\\
3&3\\
5&3\\
8&3\\
7&4\\
6&2\\
6&1\\
4&1\\
4&2\\
2&2\\
2&1
\end{tabular}
\hspace*{1cm}
}
\\
\hspace{2cm}
\centering\subfloat[Multiply-linked list\label{fig:multiplylinkedlistexample-c}] {%
	\includegraphics[width=0.7\textwidth]{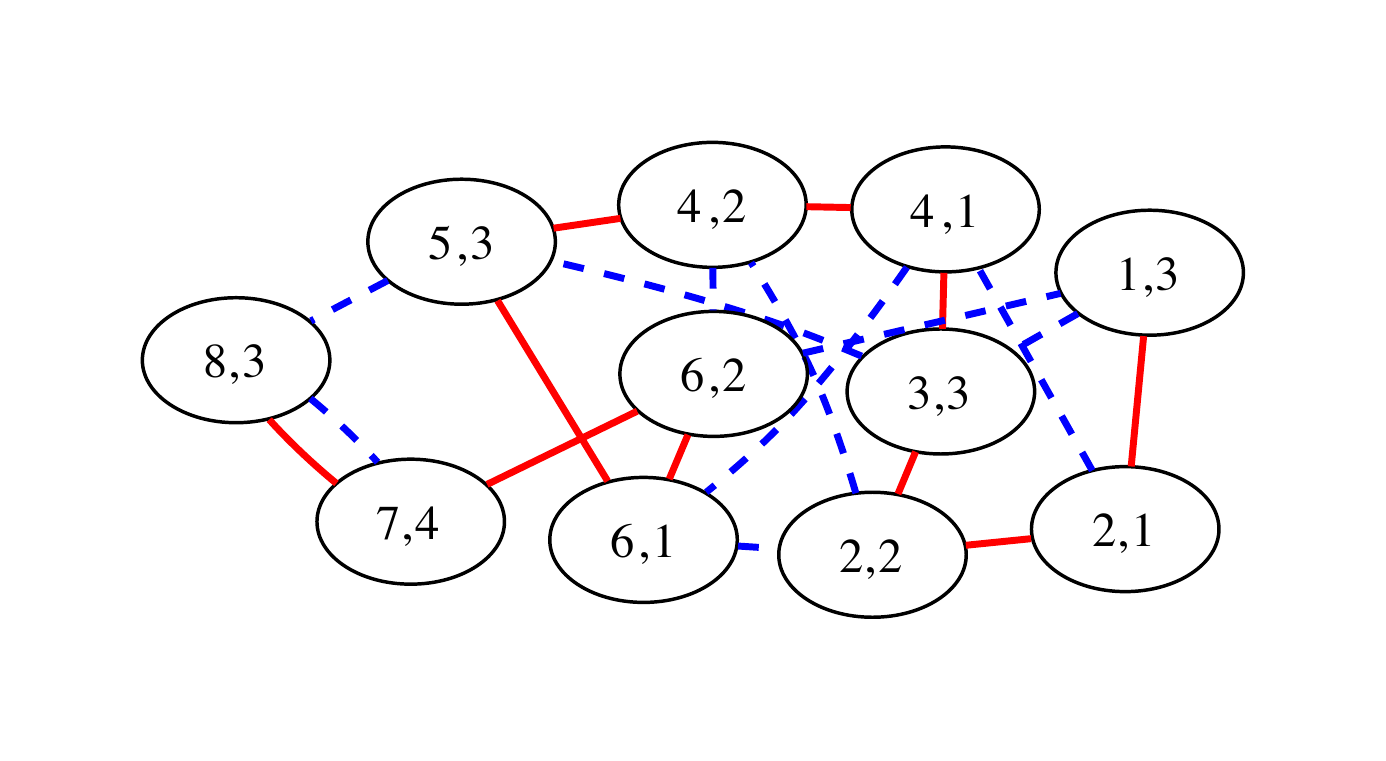}
}
\end{centering}
\caption{\label{fig:multiplylinkedlistexample} Row-reordering example
with \textsc{Multiple Lists}}
\end{figure}

For our experiments, we applied \textsc{Multiple Lists} with $K=c$ as follows. 
 First sort the table lexicographically\footnote
{Sorting with reflected Gray code yielded no appreciable improvement
on Zipfian data.
}
after ordering the columns by non-decreasing cardinalities ($N_1\leq N_2\leq \dots \leq N_c$). 
Then rotate the columns cyclically so that the first column becomes the second one, the second one becomes the third one, and the last column becomes the first: $1,2,\dots,c \rightarrow c,1,2,\dots,c-1$. Sort the table lexicographically according to this new column order. 
Repeat this process until you have $c$~different orders, each one corresponding to the same table sorted lexicographically with a different column order.

Once we have sorted the table several times,  we proceed as in \textsc{Nearest Neighbor} (see Algorithm~\ref{algo:multiplelists}): 
\begin{itemize*}
\item Start with an initially empty list $\beta$.
\item Pick a row at random, remove it from all sorted lists and add it to list $\beta$. 
\item In all sorted lists, visit all rows that are immediately adjacent (in sorted order) to the last row added. There are up to $2c$~such rows. Pick a nearest neighbor under the Hamming distance, remove it from all sorted lists and append it to     
$\beta$.
\item Repeat the last step until the sorted lists are empty. The solution is then given by list $\beta$,  which now contains all rows.
\end{itemize*}

 \textsc{Multiple Lists} runs in time $O(K c n \log n)$, or $O( c^2 n \log n)$
when we use $c$ sorted lists. 
However, we can do better in the $K=c$~case with cyclically rotated column order. First, we sort with column order $1,2,\dots, c$ in $O(c n \log n)$~time. Then, we have $N_1$~lists---one list per value in the first column---sorted in the $c,1,2,\dots,c-1$-lexicographical  order. Thus, sorting in the $c,1,2,\dots,c-1$-lexicographical order requires only $O(c n \log N_1)$~time. Similarly, sorting in the $c-1,c,1,2,\dots,c-2$-lexicographical order requires $O(c n \log N_2)$~time. And so on. Thus, the total sorting time is in 
 \begin{eqnarray*}O(c n \log n + c n \log N_1 + \dots + c n \log N_c) = O(c n \log (n N_1 N_2 \dots N_{c-1})).\end{eqnarray*} 
  We expect that $n N_1 N_2 \dots N_{c-1} \ll n ^c$ and thus \begin{eqnarray*}\log (n N_1 N_2 \dots N_{c-1})\ll c \log n\end{eqnarray*} in practice. (This approach
of reusing previous sorting orders when re-sorting is reminiscent of the Pipe~Sort algorithm~\cite{DBLP:conf/vldb/AgarwalADGNRS96}.) The overall complexity of  \textsc{Multiple Lists} is thus $O(c^2 n + c n \log (n N_1 N_2 \dots N_{c-1}))$.

\begin{algorithm}
\small  
\begin{algorithmic}[1]
\STATE \textbf{input:}  Unsorted table $t$ with $n$ rows and $c$ columns.
\STATE \textbf{output:} a sorted table
\STATE Form
$K$~different versions of $t$, sorted differently: $t^{(1)},t^{(2)},\dots, t^{(K)}$ 

\STATE $\beta \leftarrow $ empty list
\STATE pick an element in $t^{(1)}$ randomly, add it to $\beta$ and remove it from all $t^{(i)}$'s
\WHILE{$\mathrm{size}(\beta)<n$ }
\STATE let $r$ be the latest element added to $\beta$
\STATE Given $i\in \{1,2,\dots,K\}$, there are up to two neighbors in sorted order within list $t^{(i)}$; out of up to $2K$ such neighbors, pick a nearest neighbor $r'$ to $r$ in Hamming distance.
\STATE Add $r'$ to $\beta$ and remove it from all $t^{(i)}$'s
\ENDWHILE
\STATE \textbf{return} $\beta$
\end{algorithmic}
\caption{\label{algo:multiplelists}The \textsc{Multiple Lists} heuristic}

\end{algorithm}

Although an increase
in $K$ degrades the running time, we expect each new list stored to improve the solution's quality.
In fact, the heuristic \textsc{Multiple Lists}  becomes equivalent to  \textsc{Nearest Neighbor} when we  maintain the table in all
of the $c!$~lexicographical sorting orders.
This shows the following result:

\begin{lemma}\label{lemma:nnisnlogn}The  \textsc{Nearest Neighbor} heuristic over $c$-column tables and under the Hamming distance is in $O(c\, c!\, n \log n)$.
\end{lemma}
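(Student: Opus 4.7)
The plan is to derive the bound directly from the $O(K c n \log n)$ complexity that has already been established for \textsc{Multiple Lists}, combined with the paper's stated observation that \textsc{Multiple Lists} coincides with \textsc{Nearest Neighbor} when we retain all $K=c!$ lexicographic orderings. Substituting $K=c!$ into the \textsc{Multiple Lists} bound immediately gives $O(c\,c!\,n\log n)$, which is exactly the claim. So the proof is essentially a corollary, and the only nontrivial step is justifying the equivalence.

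To establish the equivalence, I would show that at every step of \textsc{Multiple Lists} using the $c!$ column-permuted lexicographic orders, the set of up to $2c!$ candidate neighbors of the current row $r$ contains a true Hamming-nearest neighbor of $r$ among the remaining rows. Let $r^{\star}$ be any nearest neighbor of $r$ at distance $k=d_H(r,r^{\star})$, and let $S\subseteq\{1,\ldots,c\}$ be the set of $c-k$ columns on which $r$ and $r^{\star}$ agree. Consider any column permutation that places the columns of $S$ first (followed by the columns of $\{1,\ldots,c\}\setminus S$); such a permutation is one of the $c!$ orderings being maintained. In the corresponding lexicographic sort, the rows that agree with $r$ on all columns of $S$ form a contiguous block $G$ containing both $r$ and $r^{\star}$. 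Every row $r'\in G\setminus\{r\}$ agrees with $r$ on $S$, so $d_H(r,r')\leq k$, and minimality forces $d_H(r,r')=k$; hence every in-block neighbor of $r$ is itself a nearest neighbor. Since $|G|\geq 2$, the row immediately preceding or succeeding $r$ within $G$ is one of $r$'s adjacent candidates in that list and is a true nearest neighbor, as needed.

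Once this is in hand, the complexity claim is essentially mechanical: the algorithm's running time was shown to be $O(K c n \log n)$, and taking $K=c!$ gives $O(c\,c!\,n\log n)$. The main obstacle is articulating the block argument cleanly, particularly in distinguishing between a permutation of the $c$ columns and the lexicographic sort it induces on the table. I do not anticipate any delicate issue in the complexity computation itself, since the per-step cost of scanning $2K$ candidate rows and updating the linked lists fits within the stated bound, and the sorting cost for the $c!$ orders is also absorbed.
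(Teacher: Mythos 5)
Your proposal is correct and follows essentially the same route as the paper: the paper obtains the lemma by asserting that \textsc{Multiple Lists} with all $c!$ lexicographic orders becomes equivalent to \textsc{Nearest Neighbor} and then substituting $K=c!$ into the $O(Kcn\log n)$ bound. The only difference is that the paper leaves the equivalence as an unproved assertion, whereas your contiguous-block argument (placing the agreement columns $S$ first so that a true nearest neighbor lands adjacent to $r$ in that sorted list) supplies a correct justification for it.
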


When there are many columns ($c>4$), constructing and maintaining $c!$~lists might be prohibitive.
Through informal tests, we found that maintaining $c$~different sort orders is a good compromise.
For $c\leq 2$, \textsc{Multiple Lists} with $c$~sorted lists is already equivalent to \textsc{Nearest Neighbor}.

Unfortunately, our implementation of \textsc{Multiple Lists}
is not suited to an external-memory implementation without partitioning.

\subsubsection{Partitioning}
\label{sec:Partitioning}

Several authors
partition---or cluster---the tuples before applying a TSP
heuristic.~\cite{cesari1996divide,1744275,karp1977,johnson2004clb,10.1109/SSDM.1999.787635} Using database terminology, they partition the table
horizontally. We  explore this
approach. The content of the horizontal partitions depends on the original order of the
rows: the first rows are in the first partition and so on. 
 Hence, we are effectively considering a tour-improvement
technique: starting from an existing row ordering, we partition it
and try to reduce the number of runs within each partition.  
For example, we can partition a lexicographically sorted table
and  process each partition in main memory using expensive heuristics such as
\textsc{Nearest Neighbor} or random-access-intensive heuristics
such as \textsc{Multiple
Lists}.

We can process each partition independently: the problem is embarrassingly parallel. 
Of course, this ignores runs created at the boundary between partitions.

Sometimes, we know the final row of the previous partition. In such cases, we might choose the initial row in the current partition to have a small Hamming distance with the last
row in the previous partition. In any case, this boundary effect becomes small as the sizes of the partitions grow.

Another immediate practical benefit of the horizontal partitioning is that we have an anytime---or interruptible---algorithm~\cite{dean1988analysis}. Indeed,
we progressively improve the row ordering, but can also abort the process at any time without losing the gains achieved up to that point.

We could ensure that the number of runs is always reduced. Indeed, whenever the application of
the heuristic on the partition makes matter worse, we can  revert back to the original tuple order.
Similarly, we can try several heuristics on each partition and pick the best. And probabilistic heuristics
such as \textsc{Nearest Neighbor} can be repeated.
Moreover, we can repartition the table: e.g.,  each new partition---except the first and the last---can take half its tuples from each of the two adjacent old partitions~\cite{johnsonmcgeoch1997}.

For simplicity, 
we can use partitions having a fixed number of rows (except maybe for the last partition).
As an alternative, we could first sort the data and then create partitions based on the value of one or several columns. Thus, for example, we could ensure that all rows within a partition have the same value in one or several columns.

\section{Maximizing the number of long runs}
\label{sec:blockwiseruns}

It is convenient to model database compression by the number
of runs (\textsc{RunCount}). However, this model  
is clearly incomplete.
For example, there is some overhead corresponding to each run of 
values we need to code: short runs are difficult to compress.

\subsection{Heuristics for long runs}

We developed a number of row-reordering heuristics whose goal was to
produce long runs.  Two straightforward approaches did not give experimental
results that justified their costs.  

One is due to an idea of 
\citeN{malik2007optimizing}.
Consider the \textsc{Nearest Neighbor} TSP heuristic. Because we
use the Hamming distance, there are often several nearest neighbors for the last tuple added.
Malik and Kender proposed a modification of 
\textsc{Nearest Neighbor}  
where they determine the best
nearest neighbor based on comparisons with the previous tuples---and
not only with the latest one.
We considered many variations on this idea, and none of them
proved consistently beneficial:
e.g., when there are several nearest neighbors to the latest tuple, we tried reducing
the set to a single tuple by removing the tuples that are not also nearest to the
second last tuple, and then removing the tuples that are not also nearest to the third
last tuple, and so on.

A novel heuristic, \textsc{Iterated Matching}, was also developed
but found to be too expensive for the quality of results typically
obtained.  It is based on the observation that a weighted-matching algorithm
can form pairs of rows that have many length-two runs.  Pairs of rows can
themselves be matched into collections of  four rows with many length-four
runs, etc.    Unfortunately, known maximum-matching algorithms are expensive
and the experimental results obtained with this heuristic were not promising.
Details can be found elsewhere~\cite{LemireKaserGutarraTR12001}.

\subsection{The \textsc{Frequent-Component} order}
\label{sec:fc}

Intuitively, we would like to sort rows so that frequent values are more likely
to appear consecutively. The \textsc{Frequent-Component}~\cite{arxiv:0901.3751} order
follows this intuition.

As a preliminary step,
we compute the frequency $f(v)$ of each column value $v$ within each of the $c$ columns.
Given a tuple, we map each component to the triple (frequency, column index, column value).\footnote{%
This differs slightly from the original presentation of the order~\cite{arxiv:0901.3751}
where the column value appeared before the column index. 
}
Thus, from the $c$~components, we derive $c$~triples (see Figs.~\ref{fig:frequentcomponentexample-a} and~\ref{fig:frequentcomponentexample-b}): 
e.g., given the tuple $(v_1, v_2, v_3)$, we get the triples $((f(v_1),1,v_1),(f(v_2),2,v_2),(f(v_3),3,v_3))$. 
We then lexicographically sort the
triples, so that the triple corresponding to a most-frequent column value appears
first---that is, we sort in reverse lexicographical order (see Fig.~\ref{fig:frequentcomponentexample-c}):
e.g.,  assuming that $f(v_3)<f(v_1)<f(v_2)$, the triples appear in sorted order as 
$((f(v_2),2,v_2),(f(v_1),1,v_1),(f(v_3),3,v_3))$.  
The new tuples are then compared against each
other lexicographically over the $3c$~values (see Fig.~\ref{fig:frequentcomponentexample-d}). When sorting, we can precompute the ordered lists of triples for
 speed. As a last step, we reconstruct the solution from the list of triples (see Fig.~\ref{fig:frequentcomponentexample-e}).

Consider a table where columns have uniform histograms: given $n$~rows and a column of
cardinality $N_i$, each value appears $n/N_i$ times. In such a case, \textsc{Frequent-Component} becomes
equivalent to the lexicographic order with the columns organized in non-decreasing cardinality.

\begin{figure}
\begin{centering}
\subfloat[Initial table\label{fig:frequentcomponentexample-a}]{%
\hspace*{0.3cm}
\begin{tabular}{cc}
1&3\\
2&1\\
2&2\\
3&3\\
4&1\\
4&2\\
5&3\\
6&1\\
6&2\\
7&4\\
8&3
\end{tabular}
\hspace*{0.3cm}
}
\subfloat[$(f(v_i),i,v_i)$\label{fig:frequentcomponentexample-b}]{%
\begin{tabular}{cc}
(1,1,1)&(4,2,3)\\
(2,1,2)&(3,2,1)\\
(2,1,2)&(3,2,2)\\
(1,1,3)&(4,2,3)\\
(2,1,4)&(3,2,1)\\
(2,1,4)&(3,2,2)\\
(1,1,5)&(4,2,3)\\
(2,1,6)&(3,2,1)\\
(2,1,6)&(3,2,2)\\
(1,1,7)&(1,2,4)\\
(1,1,8)&(4,2,3)
\end{tabular}
}
\subfloat[$(f(v_i),i,v_i)$ (sorted)\label{fig:frequentcomponentexample-c}]{%
\hspace*{0.1cm}
\begin{tabular}{cc}
(4,2,3)&(1,1,1)\\
(3,2,1)&(2,1,2)\\
(3,2,2)&(2,1,2)\\
(4,2,3)&(1,1,3)\\
(3,2,1)&(2,1,4)\\
(3,2,2)&(2,1,4)\\
(4,2,3)&(1,1,5)\\
(3,2,1)&(2,1,6)\\
(3,2,2)&(2,1,6)\\
(1,2,4)&(1,1,7)\\
(4,2,3)&(1,1,8)
\end{tabular}
\hspace*{0.1cm}
}
\subfloat[Sorted triples\label{fig:frequentcomponentexample-d}]{%
\begin{tabular}{cc}
(1,2,4)&(1,1,7)\\
(3,2,1)&(2,1,2)\\
(3,2,1)&(2,1,4)\\
(3,2,1)&(2,1,6)\\
(3,2,2)&(2,1,2)\\
(3,2,2)&(2,1,4)\\
(3,2,2)&(2,1,6)\\
(4,2,3)&(1,1,1)\\
(4,2,3)&(1,1,3)\\
(4,2,3)&(1,1,5)\\
(4,2,3)&(1,1,8)\\
\end{tabular}
}
\subfloat[Solution\label{fig:frequentcomponentexample-e}]{%
\hspace*{0.1cm}
\begin{tabular}{cc}
7&4\\
2&1\\
4&1\\
6&1\\
2&2\\
4&2\\
6&2\\
1&3\\
3&3\\
5&3\\
8&3\\
\end{tabular}
\hspace*{0.1cm}
}
\end{centering}
\caption{\label{fig:frequentcomponentexample} Row-reordering example
with \textsc{Frequent-Component}}
\end{figure}

\subsection{\textsc{Vortex}: a novel order}  \label{sec:vortex}

The \textsc{Frequent-Component} order has at least two inconveniences:
\begin{mylongitem}
\item Given $c$~columns having $N$~distinct values apiece, a  table where all
possible rows are present has
$N^c$~distinct rows. In this instance, \textsc{Frequent-Component} is equivalent to
the lexicographic order. Thus, its \textsc{RunCount} is $N^c+ N^{c-1}+\dots+1$.
Yet a better solution would be to sort the rows in a Gray-code order, generating only
$N^c + c -1$~runs. For mathematical elegance, we would rather
have Gray-code orders even though the Gray-code property may not
enhance compression in practice.
\item The  \textsc{Frequent-Component}  order requires comparisons between
the frequencies of values that are in different columns. Hence, we must
at least maintain one ordered list of values from all columns.
We would prefer a simpler alternative with less overhead.
\end{mylongitem}

Thus, to improve over \textsc{Frequent-Component}, we 
want an order that considers the frequencies of column values and yet is
a Gray-code order.
Unlike
\textsc{Frequent-Component}, we would prefer that it compare  
frequencies only between values
from the same column.

\begin{figure}\centering
\subfloat[\label{fig:lexico44}Lexicographic]{%
\hspace*{0.4cm}
\begin{tabular}{|cc|}
1 & 1\\
1 & 2\\
1 & 3\\
1 & 4\\
2 & 1\\
2 & 2\\
2 & 3\\
2 & 4\\
3 & 1\\
3 & 2\\
3 & 3\\
3 & 4\\
4 & 1\\
4 & 2\\
4 & 3\\
4 & 4\\
\end{tabular}
\hspace*{0.4cm}
}
\subfloat[\label{fig:reflected44}Reflected GC]{%
\hspace*{0.4cm}
\begin{tabular}{|cc|}
1 & 1\\
1 & 2\\
1 & 3\\
1 & 4\\
2 & 4\\
2 & 3\\
2 & 2\\
2 & 1\\
3 & 1\\
3 & 2\\
3 & 3\\
3 & 4\\
4 & 4\\
4 & 3\\
4 & 2\\
4 & 1\\
\end{tabular}
\hspace*{0.4cm}
}
\subfloat[\label{fig:vortex44}\textsc{Vortex}]{%
\hspace*{0.4cm}
\begin{tabular}{|cc|}
1 & 4\\
1 & 3\\
1 & 2\\
1 & 1\\
4 & 1\\
3 & 1\\
2 & 1\\
2 & 4\\
2 & 3\\
2 & 2\\
4 & 2\\
3 & 2\\
3 & 4\\
3 & 3\\
4 & 3\\
4 & 4\\
\end{tabular}
\hspace*{0.4cm}
}
\subfloat[\label{fig:Gray-codedcurve44}z-order with Gray codes]{%
\hspace*{0.4cm}
\begin{tabular}{|cc|}
1 & 1\\
1 & 2\\
2 & 2\\
2 & 1\\
2 & 3\\
2 & 4\\
1 & 4\\
1 & 3\\
3 & 3\\
3 & 4\\
4 & 4\\
4 & 3\\
4 & 1\\
4 & 2\\
3 & 2\\
3 & 1\\
\end{tabular}
\hspace*{0.4cm}
}
\subfloat[\label{fig:hilbert44}Hilbert]{%
\hspace*{0.4cm}
\begin{tabular}{|cc|}
1 & 1\\
2 & 1\\
2 & 2\\
1 & 2\\
1 & 3\\
1 & 4\\
2 & 4\\
2 & 3\\
3 & 3\\
3 & 4\\
4 & 4\\
4 & 3\\
4 & 2\\
3 & 2\\
3 & 1\\
4 & 1\\
\end{tabular}
\hspace*{0.4cm}
}
\caption{Two-column tables sorted in various orders }
\end{figure}

Many orders,
such  as z-orders with Gray codes~\cite{16877} 
 (see Fig.~\ref{fig:Gray-codedcurve44}) and 
Hilbert orders~\cite{hamilton2007chi,kamel1994hrt,eavis2007} (see Fig.~\ref{fig:hilbert44}), use some
form of \emph{bit interleaving}: when comparing two tuples,
we begin by comparing the most significant bits of their values before
considering the less significant bits.
Our novel \textsc{Vortex} order interleaves individual column values 
instead (see Fig.~\ref{fig:vortex44}). Informally, we describe the order
as follows: 
\begin{itemize}
\item Pick a most frequent value $x^{(1)}$ from the first column,
 select all tuples having the value $x^{(1)}$ as their first component, and put them first (see Fig.~\ref{fig:sortexample2} with $x^{(1)}=1$);
\item Consider the second column. Pick a most frequent value $y^{(1)}$. 
Among the tuples  having $x^{(1)}$ as their first 
component, select all tuples having $y^{(1)}$ as their second component and put them last.
Among the remaining tuples, select all tuples having $y^{(1)}$ as their second component, and put them first 
(see Fig.~\ref{fig:sortexample3} with $y^{(1)}=1$);
\item Repeat.
\end{itemize}
Our intuition is that, compared to bit interleaving,  this form of 
interleaving is more  likely to generate runs of identical values.
The name \textsc{Vortex} comes from the fact that initially there are long runs, followed
by shorter and shorter runs (see Figs.~\ref{fig:vortex44} and~\ref{fig:figures_VortexOrder}).

\begin{figure}
	\centering
		\includegraphics[height=3.3cm]{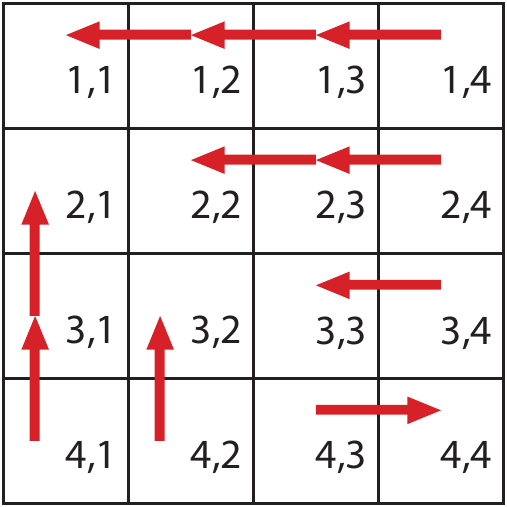}
	\caption{Graphical representation of some of the order relations between tuples in $\{1,2,3,4\}\times\{1,2,3,4\}$ under the \textsc{Vortex} order}
	\label{fig:figures_VortexOrder}
\end{figure}

\begin{figure}\centering
\subfloat[\label{fig:sortexample1}Shuffled]{%
\hspace*{1.2cm}\begin{tabular}{|cc|}
2 & 1\\
1 & 1\\
1 & 2\\
1 & 1 \\
3 & 1\\
1 & 3\\
3 & 2\\
3 & 3\\
\end{tabular}
\hspace*{1.2cm}}
\subfloat[\label{fig:sortexample2}Value~1, first column]{%
\hspace*{1.2cm}\begin{tabular}{|cc|}
\textbf{1} & 1 \\
\textbf{1} & 2 \\
\textbf{1} & 1 \\
\textbf{1} & 3 \\ \hline
3 & 2 \\
3 & 3 \\
2 & 1 \\
3 & 1 \\
\end{tabular}
\hspace*{1.2cm}}
\subfloat[\label{fig:sortexample3}Value~1, second column]{%
\hspace*{1.2cm}\begin{tabular}{|cc|}
1 & 2 \\
1 & 3 \\ \hline
1 & \textbf{1} \\
1 & \textbf{1} \\ 
2 & \textbf{1} \\
3 & \textbf{1} \\ \hline
3 & 2 \\
3 & 3 \\
\end{tabular}\hspace*{1.2cm}
}
\caption{First two steps in sorting with \textsc{Vortex}}
\end{figure}

\newcommand{\ltAlt}{<_{\mbox{\tiny ALT}}}
\newcommand{\ltVtx}{<_{\mbox{\tiny VORTEX}}}
\newcommand{\ltLex}{<_{\mbox{\tiny LEXICO}}}
\newcommand{\gtAlt}{>_{\mbox{\tiny ALT}}}
\newcommand{\gtVtx}{>_{\mbox{\tiny VORTEX}}}
\newcommand{\gtLex}{>_{\mbox{\tiny LEXICO}}}

To describe the \textsc{Vortex} order formally (see Algorithm~\ref{algo:vortexalgo}),
we introduce the alternating lexicographic order (henceforth \textsc{alternating}). 
Given two tuples $a$ and $b$, let $j$ be the first component 
where the two tuples differ ($a_j\neq b_j$ but $a_i=b_i$ for $i<j$),
then $a \ltAlt b$ if and only if $( a_j < b_j ) \oplus (j \textrm{ is even})$ 
where $\oplus$ is the exclusive or. (We use the convention that
components are labeled from 1 to $c$ so that the first component is
odd.)  Given a tuple $x=(x_1,x_2,\dots,x_c)$, let $T(x)$ be $(x_1,1),
(x_2,2), \ldots, (x_c,c)$ and $T'(x)$ be the list $T(x)$ sorted
lexicographically.  Then $x \ltVtx y$ if and only if $ T'(x) \ltAlt
T'(y)$.  \textsc{Vortex} generates a total order on tuples because
$T'$ is bijective and \textsc{alternating} is a total order.

We illustrate \textsc{Vortex} sorting in Fig.~\ref{fig:vortexexample}.
First, the initial table is normalized by frequency so that the
most frequent value in each column is mapped to value~1 (see Figs.~\ref{fig:vortexexample-a}, \ref{fig:vortexexample-b}, \ref{fig:vortexexample-c}).
In Fig.~\ref{fig:vortexexample-d}, we give the corresponding $T'$ values:
e.g., the row $1,4$ becomes $(1,2),(4,1)$. We then sort the $T'$ values
using the \textsc{alternating} order (see Fig.~\ref{fig:vortexexample-e}) before
finally inverting the $T'$ values to recover the rows in \textsc{Vortex} order
(see Fig.~\ref{fig:vortexexample-f}). Of course, these are not the rows of
the original table, but rather the rows of renormalized table. We could
further reverse the normalization to recover the initial table in \textsc{Vortex} order.

\begin{figure}
\begin{centering}
\subfloat[Initial table\label{fig:vortexexample-a}]{%
\hspace*{0.9cm}	
\begin{tabular}{cc}
1&3\\
2&1\\
2&2\\
3&3\\
4&1\\
4&2\\
5&3\\
6&1\\
6&2\\
7&4\\
8&3
\end{tabular}
\hspace*{0.9cm}
}
\subfloat[Renormalization\label{fig:vortexexample-b}]{%
\hspace*{0.9cm}
\begin{tabular}{c|c}
column 1& column 2 \\\hline
$2\rightarrow 1$ & $3\rightarrow 1$\\
$4\rightarrow 2$ & $1\rightarrow 2$\\
$6\rightarrow 3$ & $2\rightarrow 3$\\
$1\rightarrow 4$ & $4\rightarrow 4$\\
$3\rightarrow 5$ &\\
$5\rightarrow 6$ &\\
$7\rightarrow 7$ &\\
$8\rightarrow 8$ &\\
& \\
& \\
\end{tabular}
\hspace*{0.9cm}
}
\subfloat[Normalized table\label{fig:vortexexample-c}]{%
\hspace*{0.9cm}
\begin{tabular}{cc}
4&1\\
1&2\\
1&3\\
5&1\\
2&2\\
2&3\\
6&1\\
3&2\\
3&3\\
7&4\\
8&1
\end{tabular}
\hspace*{0.9cm}
}
\\
\subfloat[$T'$ values\label{fig:vortexexample-d}]{%
	\hspace*{1cm}
\begin{tabular}{c}
$(1,2),(4,1)$\\
$(1,1),(2,2)$\\
$(1,1),(3,2)$\\
$(1,2),(5,1)$\\
$(2,1),(2,2)$\\
$(2,1),(3,2)$\\
$(1,2),(6,1)$\\
$(2,2),(3,1)$\\
$(3,1),(3,2)$\\
$(4,2),(7,1)$\\
$(1,2),(8,1)$
\end{tabular}
\hspace*{1cm}
}
\subfloat[\textsc{alternating} order\label{fig:vortexexample-e}]{%
\hspace*{1cm}
\begin{tabular}{c}
$(1,1),(3,2)$\\
$(1,1),(2,2)$\\
$(1,2),(8,1)$\\
$(1,2),(6,1)$\\
$(1,2),(5,1)$\\
$(1,2),(4,1)$\\
$(2,1),(3,2)$\\
$(2,1),(2,2)$\\
$(2,2),(3,1)$\\
$(3,1),(3,2)$\\
$(4,2),(7,1)$\\
\end{tabular}
\hspace*{1cm}
}
\subfloat[Solution (normalized)\label{fig:vortexexample-f}]{%
\hspace*{1.8cm}
\begin{tabular}{cc}
1&3\\
1&2\\
8&1\\
6&1\\
5&1\\
4&1\\
2&3\\
2&2\\
3&2\\
3&3\\
7&4\\
\end{tabular}
\hspace*{1cm}
}
\end{centering}
\caption{\label{fig:vortexexample} Row-reordering example
with \textsc{Vortex}}
\end{figure}

Like the lexicographical order, 
the \textsc{Vortex}
order is oblivious to the column cardinalities $N_1, N_2, \ldots, N_c$:
we  only use the
content of the two tuples to determine which is smallest (see  Algorithm~\ref{algo:vortexalgo}).


\begin{algorithm}
\small
\begin{algorithmic}[1]
\STATE \textbf{input:} two tuples $x=(x_1,x_2,\ldots,x_c)$ and $y=(y_1,y_2,\ldots,y_c)$  
\STATE \textbf{output:} whether $x\ltVtx y$ 
\STATE $x'\leftarrow (x_1,1),(x_2,2),\ldots,(x_c,c)$ 
\STATE sort the list $x'$ lexicographically  \COMMENT{%
E.g., $(13,1), (12,2) \rightarrow (12,2), (13,1)$.}
\STATE $y'\leftarrow (y_1,1),(y_2,2),\ldots,(y_c,c)$  
\STATE sort the list $y'$ lexicographically
\FOR{$i = 1,2,\ldots,c$}
\IF{$x'_i \neq y'_i$ }
\STATE \textbf{return} $(x'_i \ltLex y'_i) \oplus (i \textrm{ is even})$
\ENDIF
\ENDFOR
\STATE \textbf{return} $\mathrm{false}$ \COMMENT{We have $x=y$.}
\end{algorithmic}
\caption{\label{algo:vortexalgo}Comparison between two tuples of integers by the \textsc{Vortex} order. We recommend organizing the columns in non-decreasing order of cardinality and the column values in non-increasing order of frequency. 
}
\end{algorithm}

Compared with \textsc{Frequent-Component},  \textsc{Vortex}
 always chooses the first (most frequent) value from
column 1, then the most frequent value from column 2, and so forth.
Indeed, we can easily show the following proposition  
using the formal definition of the \textsc{Vortex}
order:
\begin{lemma}
\label{lemma:ones-go-first}
Consider tuples with positive integer values. 
For any $1 \leq k \leq c$,
suppose $\tau$ is a tuple containing the value 1 in one of its first $k$ components,
and $\tau'$ is a tuple that does not contain the value 1 in any of its first $k$ components.
Then $\tau \ltVtx \tau '$.
\end{lemma}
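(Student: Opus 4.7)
The plan is to unpack the formal definition of $\ltVtx$ and show directly that the first entries of $T'(\tau)$ and $T'(\tau')$ decide the comparison in favor of $\tau$. The key observation is that all column values are positive integers, so the value $1$ is strictly smaller than any other value, and when we sort the pairs $(v_i, i)$ lexicographically, pairs whose first coordinate is $1$ come before all others.

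First, I will examine $T'(\tau)$. By hypothesis, some component $\tau_{j^*}$ with $j^* \le k$ equals $1$. Hence $(1, j^*)$ appears somewhere in the list $T(\tau)$, and since every other pair $(\tau_i, i)$ has $\tau_i \ge 1$ in its first coordinate, sorting $T(\tau)$ lexicographically puts a pair with first coordinate $1$ at the very front. In particular, $T'(\tau)_1 = (1, j^\star)$ for some $j^\star \le k$ (specifically, the smallest index among those components of $\tau$ equal to $1$).

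Next, I will examine $T'(\tau')$. Since no component of $\tau'$ among the first $k$ equals $1$, every pair $(\tau'_i, i)$ with $i \le k$ has $\tau'_i \ge 2$. Two sub-cases remain: either $\tau'$ has no $1$ at all, in which case $T'(\tau')_1 = (v, \cdot)$ with $v \ge 2$; or $\tau'$ contains a $1$ but only at some position $j' > k$, in which case $T'(\tau')_1 = (1, j')$ with $j' > k \ge j^\star$. In both sub-cases, $T'(\tau)_1 <_{\mbox{\tiny LEXICO}} T'(\tau')_1$, because in the first we compare $1 < v$ on the first coordinate, and in the second we compare $j^\star < j'$ on the second coordinate after equal first coordinates.

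Finally, I will invoke the definition of \textsc{alternating}. Since $T'(\tau)$ and $T'(\tau')$ already disagree at position $i=1$, which is odd, the rule $(a_i <_{\mbox{\tiny LEXICO}} b_i)\oplus (i\text{ is even})$ simplifies to $a_1 <_{\mbox{\tiny LEXICO}} b_1$, which holds. Therefore $T'(\tau) \ltAlt T'(\tau')$, and by definition $\tau \ltVtx \tau'$. There is no real obstacle here beyond checking that the two sub-cases for $\tau'$ exhaust the possibilities under the hypothesis; the rest is mechanical unpacking of Algorithm~\ref{algo:vortexalgo}.
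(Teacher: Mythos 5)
Your proof is correct, and it is exactly the argument the paper has in mind: the paper omits the proof of this lemma, remarking only that it follows ``easily\ldots using the formal definition of the \textsc{Vortex} order,'' which is precisely the mechanical unpacking you carry out. The two observations that matter---that $T'(\tau)$ begins with a pair $(1,j^\star)$ with $j^\star\leq k$ while $T'(\tau')$ begins with a pair that is lexicographically larger in either sub-case, and that the comparison is therefore decided at the first (odd) position where the exclusive-or leaves the lexicographic test unflipped---are both present and correctly justified.
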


\textsc{Frequent-Component} instead chooses the most frequent value overall,
then the second-most frequent value overall, and so forth, regardless of
which column contains them.  
Both \textsc{Frequent-Component} and  \textsc{Vortex} list all tuples
containing the first value consecutively. However, whereas \textsc{Vortex} also
lists the tuples containing the second value consecutively, \textsc{Frequent-Component} may fail to do so.
Thus, we expect \textsc{Vortex} to produce fewer runs among the most frequent values, compared to  \textsc{Frequent-Component}.

Whereas the Hilbert order is only a Gray code
when $N$ is a power of two, 
we want to show that
\textsc{Vortex}  is always an $N$-ary Gray code.  
That is, sorting all of
the tuples in $\{1,2,\dots,N\} \times \dots \times \{1,2,\dots,N\}= \{1,2,\dots,N\}^c$ creates
a Gray code---the Hamming distance of successive tuples is one.
We believe that of all possible orders,
we might as well pick those with the Gray-code property, everything else being equal.

Let $\mathrm{\textsc{Vortex}}({N_1,N_2,\ldots,N_c})$ be the $\prod_{i=1}^c N_i$ tuples in $\{1,2, \dots, N_1\} \times \dots \times \{1, 2, \dots, N_c\}$ sorted in the \textsc{Vortex} order.
Let
\begin{eqnarray*}\Lambda_{c,k}^{N}=\mathrm{\textsc{Vortex}}(\underbrace{N,N,\dots, N}_{c-k}, \underbrace{N-1,N-1,\ldots,N-1}_k). \end{eqnarray*}
We begin by the following technical lemmata which allow us to prove that \textsc{Vortex} is a Gray code
by induction.

\begin{lemma}\label{lemma:techlemmafromhell}
If 
$\Lambda_{c-1,k-1}^{N}$ and $\Lambda_{c,k}^{N}$ are Gray codes,
then  so is $\Lambda_{c,k-1}^{N}$ for any integers $N>1$, $c\geq 2$, $k\in\{1,\dots,c\}$.
\end{lemma}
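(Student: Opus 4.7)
The plan is to partition the rows of $\Lambda_{c,k-1}^{N}$ by the value in column $c-k+1$ (the last column of cardinality $N$): let $A$ contain the rows whose value there lies in $\{1,\ldots,N-1\}$, and let $B$ contain the rows whose value there is $N$. I would then verify the Gray-code property case by case, asking whether two consecutive rows in Vortex order are both in $A$, both in $B$, or one in each.

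For consecutive rows both in $A$, I would observe that the Vortex comparator depends only on tuple values and column indices, not on the declared cardinalities. Restricted to $A$, the Vortex order therefore coincides with that of $\Lambda_{c,k}^{N}$ (viewing column $c-k+1$ as having cardinality $N-1$), so the hypothesis that $\Lambda_{c,k}^{N}$ is a Gray code immediately delivers Hamming distance one.

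For consecutive rows both in $B$, all rows share the value $N$ in column $c-k+1$. I would first establish that if two tuples share a common value in some column, deleting that column preserves their Vortex order: the common pair $(N,c-k+1)$ inserted into each sorted list $T'$ behaves neutrally in the \textsc{alternating} comparison, which reduces to the comparison on the projected tuples. After this projection and order-preserving relabeling of columns, the rows of $B$ inherit the order of $\Lambda_{c-1,k-1}^{N}$, a Gray code by hypothesis; since the dropped column was constant, Hamming distance transfers back to the full table, giving distance one between consecutive $B$-rows.

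The crux, and the main obstacle, is the third case: the $A$--$B$ transitions. I need to show that whenever $t_A \in A$ and $t_B \in B$ are immediately adjacent in the Vortex order, they agree in every column other than $c-k+1$, so their Hamming distance is exactly one. My approach is by contradiction: if $t_A$ and $t_B$ disagreed in some column $j \neq c-k+1$, I would construct an intermediate tuple $t^{*}$ from $t_B$ by replacing its value $N$ in column $c-k+1$ by $t_A$'s value there. Comparing $T'(t_A)$, $T'(t^{*})$, and $T'(t_B)$ under \textsc{alternating}, I would verify that $t^{*}$ lies strictly between $t_A$ and $t_B$ in Vortex order, contradicting adjacency. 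Tracking how the position of $(N,c-k+1)$ within $T'(t_B)$ shifts when it is replaced by a smaller pair, and pinpointing the first position at which the three $T'$ lists diverge, is the delicate technical step.
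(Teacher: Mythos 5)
Your decomposition is genuinely different from the paper's, and the case you yourself flag as the crux is where it breaks down. The paper splits $\Lambda_{c,k-1}^{N}$ by whether the \emph{first} component equals $1$: the tuples with first component $1$ form a single contiguous prefix (cf.\ Lemma~\ref{lemma:ones-go-first}) that is a \emph{reversed} copy of $\Lambda_{c-1,k-1}^{N}$, the remaining tuples form a single contiguous suffix order-isomorphic to $\Lambda_{c,k}^{N}$ (rotate the first column to the last position and decrement its values by one), and there is exactly one boundary to inspect, where the two tuples visibly differ in a single component. Your split by the value in column $c-k+1$ does not produce contiguous blocks: in $\Lambda_{2,0}^{4}=\textsc{Vortex}(4,4)$ (Fig.~\ref{fig:vortex44}) the $B$-tuples $(1,4),(2,4),(3,4),(4,4)$ occupy positions $1,8,13,16$, so there are many $A$--$B$ transitions and the whole burden falls on your third case. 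The construction you propose there fails: take $t_A=(1,2)\in A$ and $t_B=(2,4)\in B$, which differ in column~$1$; your intermediate tuple is $t^{*}=(2,2)$, and the \textsc{Vortex} order places these as $(1,2)$, then $(2,4)$, then $(2,2)$, so $t^{*}$ lies on the far side of $t_B$ rather than between the two. Hence the implication ``$t_A$ and $t_B$ differ outside column $c-k+1$ $\Rightarrow$ $t^{*}$ is strictly between them'' is false, and since betweenness of three fixed tuples cannot be derived from the adjacency you are assuming for contradiction, the argument has a genuine hole. (That particular pair is not adjacent, so the lemma itself is safe, but your proof of it is not.)

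Two smaller points. First, your auxiliary claim that deleting a column in which two tuples agree preserves their \textsc{Vortex} order is false in general: for $x=(1,2)$ and $y=(1,3)$ the common pair $(1,1)$ occupies the \emph{first} slot of both sorted lists $T'$, which shifts the deciding comparison to an even position and reverses the order relative to the one-column projections ($y$ precedes $x$ in Fig.~\ref{fig:vortex44}, yet the projection of $x$ precedes that of $y$). The paper's own proof relies on exactly this reversal for its first block. Your instance is salvageable only because $(N,c-k+1)$ is the lexicographically largest pair that can occur in any tuple of $\Lambda_{c,k-1}^{N}$ (every other column either carries a value below $N$ or has a smaller index), so it always sits in the \emph{last} slot of $T'$ and really is neutral; this needs to be said explicitly. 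Second, your case of two consecutive rows both in $A$ is correct and is the clean part of the argument: the comparator's obliviousness to column cardinalities does give the order of $\Lambda_{c,k}^{N}$ on $A$.
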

\begin{proof}
Assume that $\Lambda_{c-1,k}^{N}$ and $\Lambda_{c,k}^{N}$ are Gray codes.
The $\Lambda_{c,k-1}^{N}$
tuples begin with the tuple 
\begin{eqnarray*}(1,\underbrace{N,\ldots,N}_{c-k},\underbrace{N-1, N-1,\ldots, N-1}_{k-1})
\end{eqnarray*} 
and they continue up to the  tuple 
\begin{eqnarray*}(1,1,\underbrace{N,\ldots,N}_{c-k-1},\underbrace{N-1,N-1,\ldots,N-1}_{k-1}).
\end{eqnarray*} 
Except for the first column which has a fixed value (one), these tuples are in reverse $\Lambda_{c-1,k-1}^{N}$ order, so they form a Gray code.
The next tuple in $\Lambda_{c,k-1}^{N}$ is 
\begin{eqnarray*}(N,1,\underbrace{N,\ldots,N}_{c-k-1},\underbrace{N-1,N-1,\ldots,N-1}_{k-1})
\end{eqnarray*} and the following tuples are in an order
equivalent to $\Lambda_{c,k}^N$, except that we must consider the first column as the last and decrement its values by one, while the second column is considered the first, the third column the second, and so on. 
The proof concludes.
\end{proof}

\begin{lemma}\label{lemma:nequal2isgraycode}
For all $c\geq 1$ and all $k\in \{0,1,\ldots, c\}$,  $\Lambda_{c,k}^{2}$ is a Gray code. 
\end{lemma}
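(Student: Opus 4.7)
My plan is to prove Lemma~\ref{lemma:nequal2isgraycode} by a double induction, using Lemma~\ref{lemma:techlemmafromhell} as the engine that carries the inductive step.

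First I would set up an outer induction on $c$. For the base case $c=1$, I need to check $\Lambda_{1,0}^2$ and $\Lambda_{1,1}^2$. The latter is a single-column table with cardinality one, containing just one tuple $(1)$, so it is trivially a Gray code. The former is $\textsc{Vortex}(2)$, which lists the tuples $(1),(2)$, and successive tuples differ in one component.

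For the inductive step, I would assume that $\Lambda_{c-1,j}^{2}$ is a Gray code for every $j \in \{0,1,\ldots,c-1\}$ and then prove $\Lambda_{c,k}^{2}$ is a Gray code for every $k \in \{0,1,\ldots,c\}$ by a second, reverse induction on $k$ descending from $c$ to $0$. The inner base case is $k=c$: here $\Lambda_{c,c}^{2} = \textsc{Vortex}(\underbrace{1,\ldots,1}_{c})$ contains the single tuple $(1,1,\ldots,1)$, which is vacuously a Gray code. For the inner inductive step, assuming $\Lambda_{c,k}^{2}$ is a Gray code, I invoke Lemma~\ref{lemma:techlemmafromhell} with $N=2$: its two hypotheses are precisely $\Lambda_{c-1,k-1}^{2}$ (a Gray code by the outer inductive hypothesis) and $\Lambda_{c,k}^{2}$ (a Gray code by the inner inductive hypothesis), yielding that $\Lambda_{c,k-1}^{2}$ is a Gray code.

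I do not expect a significant obstacle: Lemma~\ref{lemma:techlemmafromhell} is tailor-made for this descent, and the Gray-code property is preserved by both the ``reverse $\Lambda_{c-1,k-1}^N$ order'' prefix and the ``$\Lambda_{c,k}^N$ with a column permutation'' suffix identified in that lemma. The only subtlety worth double-checking is that the transition between the two blocks, which shifts the first column from $1$ to $N=2$ while keeping the other components fixed, indeed has Hamming distance one -- but this is immediate from the explicit tuples written down in the proof of Lemma~\ref{lemma:techlemmafromhell}. Hence the induction closes cleanly and the lemma follows for all $c\geq 1$ and all $k\in\{0,1,\ldots,c\}$.
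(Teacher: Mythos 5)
Your proof is correct, but it takes a genuinely different route from the paper. The paper's own argument for Lemma~\ref{lemma:nequal2isgraycode} is direct and non-inductive: it observes that $\Lambda_{c,0}^{2}$ coincides with the mixed-radix reflected Gray code on columns with values $1$ and $2$ (illustrated for $c=3$), and then disposes of all $k>0$ at once by noting that the $k$ columns of cardinality one contribute nothing to any Hamming distance and can be discarded, reducing $\Lambda_{c,k}^{2}$ to $\Lambda_{c-k,0}^{2}$. You instead re-run the inductive engine of Lemma~\ref{lemma:techlemmafromhell} at $N=2$, with the outer induction on $c$ anchored at the one-column case and the inner descending induction on $k$ anchored at the single-tuple table $\Lambda_{c,c}^{2}=\textsc{Vortex}(1,\ldots,1)$. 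This is legitimate---Lemma~\ref{lemma:techlemmafromhell} is stated for all $N>1$ and is proved independently of Lemma~\ref{lemma:nequal2isgraycode}, so there is no circularity, and your side conditions ($c\geq 2$, $k\in\{1,\ldots,c\}$) are all respected. What your approach buys is that you never need to identify $\Lambda_{c,0}^{2}$ with the reflected Gray code, a fact the paper asserts by example rather than proves; what it costs is that you duplicate, for $N=2$, exactly the double-loop structure that the proof of Proposition~\ref{prop:vortexisagraycode} already carries out for $N\geq 3$---which suggests the $N=2$ row of Fig.~\ref{fig:illustrationofthebignastyproof} need not have been a separate base case at all, had the main induction been started from the trivial anchors you use. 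The paper's version remains shorter and gives more structural insight (Vortex on binary columns \emph{is} reflected GC), while yours is more uniform and arguably more self-contained.
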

\begin{proof}
We have that $\Lambda_{c,0}^{2}$ is a reflected Gray code with column values 1 and 2: e.g., for $c=3$, we have the order
\begin{eqnarray*}\Lambda_{3,0}^{2} & = &
\begin{matrix}
1, 2, 2\\
1, 2, 1\\
1, 1, 1\\
1, 1, 2\\
2, 1, 2\\
2, 1, 1 \\
2, 2, 1\\
2, 2, 2\\
\end{matrix}.
\end{eqnarray*}
Thus, we have that  $\Lambda_{c,0}^{2}$ is always a Gray code. Any column with cardinality one can be trivially
discarded.
Hence, we have that $\Lambda_{c,k}^{2}$ is always a Gray code for all $k\in \{0,1,\ldots, c\}$, proving the lemma.
\end{proof}

\begin{lemma}\label{lemma:onedisgraycode}
For any value of $N$ and for $k=0$ or $k=1$,  $\Lambda_{1,k}^{N}$ is a Gray code.

\end{lemma}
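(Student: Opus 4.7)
The plan is to observe that this case is essentially trivial because single-component tuples can only differ in one position. I would unpack the definition of $\Lambda_{1,k}^N$ for each value of $k$ and then appeal directly to the definition of Hamming distance.

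First I would handle $k=0$. By definition, $\Lambda_{1,0}^N = \mathrm{\textsc{Vortex}}(N)$, which is the sorting of all 1-tuples in $\{1,2,\dots,N\}$ under the \textsc{Vortex} order. Since \textsc{Vortex} is a total order (as noted immediately after Algorithm~\ref{algo:vortexalgo}), all listed tuples are distinct. For two distinct 1-tuples $(a)$ and $(b)$, we have $a \neq b$, so their Hamming distance is exactly $1$. Hence any two consecutive entries of $\Lambda_{1,0}^N$ differ in Hamming distance by exactly $1$, which is precisely the Gray-code property.

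Next I would handle $k=1$ the same way: by definition $\Lambda_{1,1}^N = \mathrm{\textsc{Vortex}}(N-1)$, which enumerates all 1-tuples in $\{1,2,\dots,N-1\}$. The identical argument applies—successive distinct 1-tuples differ only in their single component, so the Hamming distance between consecutive entries is $1$.

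There is really no obstacle here; the only thing to be careful about is to verify the indexing conventions of the $\Lambda$ notation against the definition given just before Lemma~\ref{lemma:techlemmafromhell} (namely, that $k$ counts how many columns have cardinality $N-1$ instead of $N$). Once that bookkeeping is confirmed, the result follows because the Gray-code property on a sequence of distinct 1-tuples is automatic. This lemma will then serve as one of the base cases for the induction set up via Lemmas~\ref{lemma:techlemmafromhell} and~\ref{lemma:nequal2isgraycode} to show that $\Lambda_{c,k}^N$ is a Gray code for all $c \geq 1$, $N \geq 2$, and $0 \leq k \leq c$.
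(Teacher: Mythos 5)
Your proof is correct and takes essentially the same route as the paper: both rest on the observation that any ordering of distinct one-component tuples is automatically a Gray code (successive entries differ in their single component, hence have Hamming distance one), and both dispose of the $k=1$ case via the identity $\Lambda_{1,1}^{N}=\Lambda_{1,0}^{N-1}$. You merely spell out the Hamming-distance justification that the paper leaves implicit.
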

\begin{proof}
We have that  $\Lambda_{1,0}^{N}$ is a Gray code for any value of $N$ because one-component tuples are always
Gray codes. It immediately follows that  $\Lambda_{1,1}^{N}$ is also a Gray code for any $N$ since, by 
definition, $\Lambda_{1,1}^{N}=\Lambda_{1,0}^{N-1}$.
\end{proof}

\begin{proposition}\label{prop:vortexisagraycode}The \textsc{Vortex} order is an $N$-ary Gray code. 
\end{proposition}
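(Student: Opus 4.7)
The plan is to prove Proposition~\ref{prop:vortexisagraycode} by a double induction, with the outer induction on $N$ and the inner induction on $c$, using the three preceding lemmas as the combinatorial engine. The goal is to establish the stronger statement that $\Lambda_{c,k}^{N}$ is a Gray code for every $c\geq 1$, every $N\geq 2$, and every $k\in\{0,1,\ldots,c\}$; the proposition is then the case $k=0$.

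For the outer base case $N=2$, Lemma~\ref{lemma:nequal2isgraycode} directly gives that $\Lambda_{c,k}^{2}$ is a Gray code for every $c\geq 1$ and every admissible $k$, so there is nothing further to do. For the outer inductive step, fix $N\geq 3$ and assume as an induction hypothesis that $\Lambda_{c',0}^{N-1}$ is a Gray code for every $c'\geq 1$. The key observation that bridges the two inductions is that $\Lambda_{c,c}^{N}$ and $\Lambda_{c,0}^{N-1}$ are identical, since in $\Lambda_{c,c}^{N}$ every column has cardinality $N-1$. Hence the outer hypothesis yields ``for free'' that $\Lambda_{c,c}^{N}$ is a Gray code for every $c\geq 1$; this is the ingredient needed to start the descent in $k$.

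For the inner induction (on $c$, with $N$ now fixed), the base case $c=1$ is exactly Lemma~\ref{lemma:onedisgraycode}, which covers $k=0$ and $k=1$. For the inner inductive step, suppose $\Lambda_{c-1,k'}^{N}$ is a Gray code for every $k'\in\{0,\ldots,c-1\}$. I would then establish $\Lambda_{c,k}^{N}$ for $k=c,c-1,\ldots,0$ by downward induction on $k$: the case $k=c$ is the bridging observation above, and the step from $k$ to $k-1$ is precisely Lemma~\ref{lemma:techlemmafromhell}, whose hypotheses $\Lambda_{c,k}^{N}$ and $\Lambda_{c-1,k-1}^{N}$ are supplied respectively by the previous downward step and by the inner inductive hypothesis. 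Taking $k=0$ at the end closes the inner induction, and taking $c$ arbitrary closes the outer induction.

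I do not anticipate a genuine obstacle—the three prior lemmas were designed precisely to slot into this scheme—so the main task is simply to present the double induction cleanly and to make explicit the identification $\Lambda_{c,c}^{N}=\Lambda_{c,0}^{N-1}$, which is what allows the outer induction on $N$ to feed the inner downward induction on $k$. The only notational care needed is to verify that the edge cases $k=c$ and $k=0$ of Lemma~\ref{lemma:techlemmafromhell} are well defined within the ranges for which the prior lemmas apply; these follow directly from the statements as written.
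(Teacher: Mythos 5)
Your proposal is correct and follows essentially the same route as the paper: an outer induction on $N$ anchored at $N=2$ by Lemma~\ref{lemma:nequal2isgraycode}, an inner induction on $c$ anchored by Lemma~\ref{lemma:onedisgraycode}, and a downward induction on $k$ driven by Lemma~\ref{lemma:techlemmafromhell}, with the identification $\Lambda_{c,c}^{N}=\Lambda_{c,0}^{N-1}$ serving as the bridge---exactly the structure the paper expresses as nested pseudocode loops. The only cosmetic difference is that you carry a slightly leaner outer hypothesis (only the $k=0$ cases at level $N-1$), which suffices.
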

\begin{proof}
The proof uses a multiple induction argument, which we express  as 
pseudocode. At the end of each pass through the loop on $N$, we have that  $\Lambda_{c,k}^{N'}$ is a Gray code
for all $N'\leq N$, all $c\in \{ 2,\dots,\mathcal{C}\}$ and all $k\in \{ 0, 1,2,\ldots, c\}$.  
The induction begins from the cases  where there are only two values per column ($N=2$, see Lemma~\ref{lemma:nequal2isgraycode}) or only one column ($c=1$, Lemma~\ref{lemma:onedisgraycode}). See Fig.~\ref{fig:illustrationofthebignastyproof}
for an illustration.

\begin{quote} 
\small 
\begin{algorithmic}[1]
\FOR{$N=3,4,\ldots,\mathcal{N}$}
\STATE We have that $\Lambda_{c,k}^{N-1}$ is a Gray code for all  $c\in \{ 2,3, \dots,\mathcal{C}\}$ and 
also for 
all $k\in \{ 0, 1,2,\ldots, c\}$. \COMMENT{For $N>3$, this is true from the previous pass in
the loop on $N$. For $N=3$, it follows from Lemma~\ref{lemma:nequal2isgraycode}.}\label{line:longcomment}
\FOR{$c=2,\dots,\mathcal{C}$\label{line:looponc}}
\STATE By line~\ref{line:longcomment},  $\Lambda_{c,0}^{N-1}$ is a Gray code, and by definition $\Lambda_{c,c}^{N}=\Lambda_{c,0}^{N-1}$.\label{line:linefromhell}
\FOR{$k=c, c-1, \dots, 1$}
\STATE (1) $\Lambda_{c,k}^{N}$ is a Gray code \COMMENT{when $k=c$ it follows by line~\ref{line:linefromhell} and otherwise by line~\ref{line:bigconclusion} from the previous pass of 
 the loop on $k$}
\label{line:needslemmaOne}
\STATE (2) $\Lambda_{c-1,k-1}^{N}$ is a Gray code \COMMENT{when $c=2$, it follows by Lemma~\ref{lemma:onedisgraycode}, and for $c>2$ it follows from the previous pass of the loop on $c$}
\label{line:needslemmaTwo}
\STATE (1) + (2) $\Rightarrow$  $\Lambda_{c,k-1}^{N}$ is a Gray code by Lemma~\ref{lemma:techlemmafromhell}.\label{line:bigconclusion}
\ENDFOR
\STATE Hence, 
 $\Lambda_{c,0}^{N}$ is a Gray code.  (
And also 
$\Lambda_{c,k}^{N}$  for  all $k\in \{ 0, 1, 2, \ldots, c-1\}$.)
\ENDFOR
\ENDFOR
\end{algorithmic}
\end{quote}

\begin{figure}
\centering
\begin{tabular}{c|cccc}
          &   $c=1$  & $c=2$ & $\cdots$ & $\mathcal{C}$ \\ \hline 
$N=2$       &       \multicolumn{4}{c}{\small  Lemma~\ref{lemma:nequal2isgraycode}}           \\   
$N=3$      &   \multirow{2}{*}{\begin{turn}{90}\small Lemma~\ref{lemma:onedisgraycode}\end{turn}}     &    $\rightarrow$    & $\rightarrow$   &    \\
\multirow{2}{*}{$\vdots$} &        & \multirow{2}{*}{$\vdots$}    &  \multirow{2}{*}{$\vdots$} & \\ 
& & & \\
& & & \\
$\mathcal{N}$ &        &    $\rightarrow$   &  $\rightarrow$    & \\ 
\end{tabular}
\caption{\label{fig:illustrationofthebignastyproof}Illustration of the basis for the induction argument in the proof of Proposition~\ref{prop:vortexisagraycode}}
\end{figure}

The integers $\mathcal{N}$ and $\mathcal{C}$ can be arbitrarily large.
Thus, the pseudocode shows that $\Lambda_{c,k}^{N}$ is always a Gray code which proves
that \textsc{Vortex} is an $N$-ary Gray code for any number of columns and any value of $N$.
\end{proof}

\section{Experiments: TSP and synthetic data}
\label{sec:experiments}

We present two groups of experiments.  In this section, we use synthetic data to compare
many heuristics for minimizing \textsc{RunCount}.
Because the minimization of \textsc{RunCount}
reduces to the TSP over the Hamming distance, we effectively 
assess TSP heuristics. 
Two heuristics stand out, and then in \S~\ref{sec:realexperiments} these two heuristics are assessed
in more realistic settings  with actual database compression techniques and 
large tables. The Java source code necessary to reproduce our experiments is at~\url{http://code.google.com/p/rowreorderingjavalibrary/}.

The aHDO scheme 
runs a complete pass through the table, trying to permute successive
rows. If a pair of rows is permuted, we run through another pass over
the entire table. We repeat until no improvement is possible.
In contrast, the more expensive tour-improvement heuristics (\textsc{BruteForcePeephole} 
 and \textsc{1-reinsertion})
do a single pass through the table. The \textsc{BruteForcePeephole} is applied on successive
blocks of 8~rows.

\subsection{Reducing the number of runs on Zipfian tables}
\label{sec:results-zipf}

Zipfian distributions are commonly used~\cite{eavis2007,houkjoer2006simple,Gray:1994:QGB:191843.191886} to model value distributions
in databases:
within a column, the frequency of the $i^{\mathrm{th}}$~value is 
proportional to 
$1/i$. If a table has $n$~rows, we allow each column to have
$n$~possible distinct values, not all of which will usually appear.
We generated tables with 8\,192--1\,048\,576~rows, using four Zipfian-distributed columns that were generated
independently. Applying Lemma~\ref{lemma:omegabound} to these tables, 
the lexicographical order is $\omega$-optimal for
$\omega \approx 3$.

\begin{table}\tbl{\label{table:owenstable}Relative reduction in \textsc{RunCount} compared to lexicographical sort (set at 1.0), for Zipfian tables ($c=4$) }{%
\begin{tabular}{l|ccc}
 & \multicolumn{3}{c}{relative \textsc{RunCount} reduction} \\
 & $n=$8\,192 &	$n=$131\,072 &	$n=1\,048\,576$\\\hline
\textsc{Lexicographic Sort} & 1.000 &	1.000	& 1.00	\\
\textsc{Multiple Lists} & 1.167	& 1.188	& \textbf{1.204}	\\
\textsc{Vortex}	& 1.154	& 1.186& 	1.203\\
\textsc{Frequent-Component}	& 1.151	& 1.186	& 1.203	\\
\textsc{Nearest Neighbor}&  	1.223&	1.242 &	\\	
\textsc{Savings}&  	\textbf{1.225}	& \textbf{1.243} &	\\	
\textsc{Multiple Fragment}	&  1.219&	1.232 &	\\	
\textsc{Farthest Insertion}		&1.187 & &	\\		
\textsc{Nearest Insertion}	& 1.214	 & & \\		
 \textsc{Random Insertion}& 1.201		& &	\\		
\textsc{Lexicographic Sort}+\textsc{1-reinsertion}&1.171		& &		\\	
\textsc{Vortex}+\textsc{1-reinsertion}&1.193		& &	\\	
\textsc{Frequent-Component}+\textsc{1-reinsertion}& 1.191& &	\\	
\end{tabular}
}
\end{table}

The \textsc{RunCount} results are presented in Table~\ref{table:owenstable}.
We present relative \textsc{RunCount} reduction values: a value of 1.2 means that
the heuristic is 20\% better than lexicographic sort.
For some less scalable heuristics, we only give results for small tables. Moreover, we omit results for
aHDO~\cite{malik2007optimizing} and \textsc{BruteForcePeephole} (with partitions of eight rows) because
these tour-improvement heuristics failed to improve any tour by more than 1\%. Because 
\textsc{BruteForcePeephole}  fails, we conclude that
all heuristics we consider are ``locally almost optimal'' because you cannot improve them appreciably by 
optimizing subsets of 8~consecutive rows.

Both \textsc{Frequent-Component} and \textsc{Vortex} are better than the lexicographic order. 
The \textsc{Multiple Lists} heuristic is even better but slower. Other heuristics such as \textsc{Nearest Neighbor}, 
\textsc{Savings},  \textsc{Multiple Fragment} and the insertion heuristics  are even better, but with worse running time complexity. 
For all heuristics, the run-reduction efficiency grows
with the number of rows. 

\subsection{Reducing the number of runs on uniformly distributed tables}
\label{sec:results-uniform}

We also ran experiments using uniformly distributed tables, generating
$n$-row tables where each column has $n$~possible values.
Any value within the table 
can take one of $n$~distinct values with probability $1/n$.\footnote{We abuse the terminology slightly by referring to these tables as uniformly distributed:
the models used to generate the tables are uniformly distributed, but the data of the generated
tables may not have uniform histograms. }
For these tables, 
the lexicographical order is 3.6-optimal according to Lemma~\ref{lemma:omegabound}.

Table~\ref{table:owenstableuni} summarizes the results.
The most striking difference with Zipfian tables is that the efficiency
of most heuristics drops drastically. Whereas \textsc{Vortex}
and \textsc{Frequent-Component}  are 20\% superior to
the lexicographical order on Zipfian data, they are barely better (2\%)
on uniformly distributed data. Moreover, we fail to see improved gains
as the tables grow larger---unlike the Zipfian case.
Intuitively, a uniform model implies
that there are fewer opportunities to create long runs of identical
values in several columns, when compared to a Zipfian model. This probably explains the poor
performance of  \textsc{Vortex}
and \textsc{Frequent-Component}.

We were surprised by how well \textsc{Multiple Lists} performed on uniformly
distributed data, even for small $n$. It
fared better than most heuristics, including \textsc{Nearest Neighbor}, by
beating lexicographic sort by 13\%.   Since \textsc{Multiple Lists} is
a variant of the greedy \textsc{Nearest Neighbor} that considers a subset
of the possible neighbors, we see that more choice does not necessarily
give better results. 
\textsc{Multiple Lists}  is a good choice for this problem.

\begin{table}
\tbl{\label{table:owenstableuni}Relative reduction in \textsc{RunCount} compared to lexicographical sort (set at 1.0), for uniformly distributed tables ($c=4$) }{%
\begin{tabular}{l|ccc}%
 & \multicolumn{3}{c}{relative \textsc{RunCount} reduction} \\
 & $n=$8\,192 &	$n=$131\,072 &	$n=1\,048\,576$\\\hline
\textsc{Lexicographic Sort} & 1.000 &	1.000	& 1.00	\\
\textsc{Multiple Lists} & 1.127	& \textrm{1.128}	& \textbf{1.128}	\\
\textsc{Vortex}	& 1.020	& 1.020& 	1.021\\
\textsc{Frequent-Component}	& 1.022	& 1.023	& 1.023	\\
\textsc{Nearest Neighbor}&  	1.122&	1.123 &	\\	
\textsc{Savings}&  	1.122	& 1.123 &	\\
\textsc{Multiple Fragment}	&  \textbf{1.133}&	\textbf{1.133} &	\\	
\textsc{Farthest Insertion}		&1.075 & &	\\		
\textsc{Nearest Insertion}	& \textrm{1.129}	 & & \\		
 \textsc{Random Insertion}& 1.103		& &	\\		
\textsc{Lexicographic Sort}+\textsc{1-reinsertion}&1.092		& &		\\	
\textsc{Vortex}+\textsc{1-reinsertion}&1.094		& &	\\	
\textsc{Frequent-Component}+\textsc{1-reinsertion}& 1.080& &	\\	
\end{tabular}
}
\end{table}

\subsection{Discussion}

For these in-memory synthetic data sets, we find that the good
heuristics to minimize \textsc{RunCount}---that is, to solve the TSP
under the Hamming distance---are \textsc{Vortex} and \textsc{Multiple
  Lists}. They are both reasonably scalable in the number of rows ($O(n \log n)$) and they perform
well as TSP heuristics. 

\textsc{Frequent-Component} would be another worthy alternative, but it is harder to implement as efficiently as  \textsc{Vortex}.
Similarly, we found that \textsc{Savings} and \textsc{Multiple Fragment} could be superior TSP heuristics for Zipfian and uniformly distributed tables, but they scale poorly with respect to the number of rows: they have a quadratic  running time ($O(n^2)$). For small tables ($n=131\,072$), they were three and four orders of magnitude slower than \textsc{Multiple Lists} in our tests.

\section{Experiments with realistic tables}
\label{sec:realexperiments}
 
 Minimizing \textsc{RunCount}
 on synthetic data might have theoretical appeal 
and be somewhat applicable to many applications.  However,
 we also want to determine whether row-reordering heuristics 
can 
specifically
improve table compression, and therefore database performance,
 on realistic examples.  
This requires that we move from general models such as 
\textsc{RunCount}  to size measurements
based on specific database-compression techniques.  It also requires  large real data sets.

\subsection{Realistic column storage}
\label{sec:realistic-column-storage}
We use conventional dictionary coding~\cite{springerlink:10.1007/978-3-642-15105-7_10,1559877,Poess:2003:DCO:1315451.1315531}
prior to row-reordering and compression.
That is,  we map column values bijectively
to 32-bit integers in $[0,N)$ where $N$ is the number of distinct column
 values.\footnote{%
We do not store actual column values (such as strings): their
compression is outside our scope.}
We map the most frequent values to the smallest  integers~\cite{rlewithsorting}.

We compress tables using five database compression schemes:
Sparse, Indirect and Prefix coding as well as a fast
variation on Lempel-Ziv
and RLE\@.
We compare the compression ratio of each technique under
row reordering. For simplicity, we select only one compression
scheme at a time: all columns are compressed using the
same technique.

\subsubsection{Block-wise compression}
\label{sec:bigsectionondict}

The  SAP~NetWeaver 
platform~\cite{springerlink:10.1007/978-3-642-15105-7_10} uses 
three compression techniques: Indirect, Sparse and Prefix coding. We implemented them and set the block size to $p=128$~values.

\begin{mylongitem}
\item 
\emph{Indirect coding} is a block-aware dictionary technique. For each column and each block,
we build a list of the $N'$~values encountered. Typically, $N' \ll N$, where $N$ is the total number
of distinct values in the column.  Column values are then mapped
to integers in $[0,N')$ and packed using $\lceil \log N \rceil$~bits each~\cite{ng1997block,1559877}.
Of course, we must also store the actual values of the $N'$~codes used
for each block and column. Thus, whereas dictionary coding requires $p \lceil \log N \rceil$~bits to store a
column, Indirect coding requires $N'\lceil \log N \rceil+ p \lceil \log N' \rceil$~bits---plus the
small 
overhead of storing the value of $N'$. In
the worst case, the storage cost of indirect coding is twice as large as the storage cost
of conventional dictionary coding. 
However, whenever 
$N'$ is small enough,
indirect coding is preferable. 
Indirect coding is related to the block-wise value coding used by Oracle~\cite{Poess:2003:DCO:1315451.1315531}.

\item
\emph{Sparse coding} stores the most frequent value using  an $p$-bit bitmap to indicate 
where this most frequent value appears. Other values are stored 
using $\lceil \log N \rceil$~bits.
If the most frequent value appears $\zeta$~times,
then the total storage cost is $(p-\zeta+1) \lceil \log N \rceil+p$~bits.

\item

\emph{Prefix coding} begins by counting how many times
the first value repeats at the beginning of the block;  this counter is stored first
along with the value being repeated.  
Then all other values are packed. In the worst case, the first value is not repeated, and
Prefix coding wastes $\lceil \log p \rceil$~bits compared to conventional dictionary coding.
Because
it counts the length of a run, 
Prefix coding can be considered a form of  RLE (\S~\ref{sec:bigsectrle}).
\end{mylongitem}

\subsubsection{Lempel-Ziv-Oberhumer}
\label{sec:lzo}
Lempel-Ziv compression~\cite{ziv1978compression} compresses data by replacing sequences of characters 
by references
to 
previously encountered sequences. 
The Lempel-Ziv-Oberhumer (LZO)  library~\cite{lzolib} implements fast versions of the conventional Lempel-Ziv compression 
technique. 
\citeN{1142548} evaluated several alternatives including  Huffman and Arithmetic encoding,
but found that compared with LZO, they were all
too expensive for databases.

If a long data array is made of repeated characters---e.g., \texttt{aaaaa}---or repeated
short sequences---e.g., \texttt{ababab}---we expect most Lempel-Ziv compression techniques
to generate a compressed output that grows logarithmically with the size of the input.
It appears to be the case with the LZO library: its LZO1X CODEC uses 16~bytes to code 64~identical 32-bit integers,
17~bytes to code 128~identical integers, and 19~bytes for 256~integers. For our tests, we used LZO~version~2.05.

\subsubsection{Run-Length Encoding}
\label{sec:bigsectrle}

We implemented RLE by storing each run as a triple~\cite{1083658}:
value, starting point and length of the run. Values are packed using 
$\lceil \log N \rceil$~bits whereas both the starting point and the length
of the run are stored using $\lceil \log n \rceil$~bits, where $n$ is the
number of rows.

\subsection{Data sets}
\label{sec:Datasets}
\begin{table}\tbl{\label{tab:caractDataSet}Characteristics of data sets used with bounds on the optimality of lexicographical order ($\omega$), and a measure of statistical deviation ($p_0$).}{%
    \begin{small}
    \begin{tabular}{lrrrrrr} %
     &  rows & distinct rows & cols & \rule{0mm}{1.1em} $\sum_i N_i$ & $\omega$ & $p_0$\\ \hline
    \multicolumn{1}{l}{\textbf{Census1881} }        & 4\,277\,807  & 4\,262\,238  & 7 & 343\,422   & 2.9 & 0.17\\
    \multicolumn{1}{l}{\textbf{Census-Income} }         & 199\,523 & 196\,294   & 42 & 103\,419  & 12 & 0.65\\
    \multicolumn{1}{l}{\textbf{Wikileaks} }        & 1\,178\,559  & 1\,147\,059  & 4 & 10\,513   & 1.3 & 0.04\\
    \multicolumn{1}{l}{\textbf{SSB} (DBGEN) }        & 240\,012\,290  & 240\,012\,290   & 15 & 8\,874\,195   & 8.0 & 0.10\\
    \multicolumn{1}{l}{\textbf{Weather} }        & 124\,164\,607  & 124\,164\,371  & 19 & 52\,369   & 4.5 &  0.36\\
    \multicolumn{1}{l}{\textbf{USCensus2000} }        & 37\,019\,068  & 22\,493\,432  & 10 & 2\,774\,239   & 3.4 & 0.54\\
    \end{tabular}
     \end{small}
     }
\end{table}

We selected realistic data sets (see Table~\ref{tab:caractDataSet}): Census1881~\cite{rlewithsorting,prdh:website}, 
Census-Income~\cite{MLRepository}, 
a Wikileaks-related data set,
the fact table from the Star Schema Benchmark (SSB)~\cite{1692916}
 and Weather~\cite{hahn:weatherbench}.  Census1881 comes from the Canadian census of 1881: it is over 305\,MB and over 4~million records.
Census-Income is the smallest data set with 100\,MB and 199\,523~records. However, it has 42~columns 
and one column has a very high relative cardinality (99\,800 distinct values).
The Wikileaks table was created from a public repository published by Google\footnote{\url{http://www.google.com/fusiontables/DataSource?dsrcid=224453}} and it contains the non-classified metadata related to 
leaked diplomatic cables. We extracted 4~columns: year, time, place and descriptive code.
It has 1\,178\,559~records.
We generated the SSB fact table using a version of the DBGEN software
modified by O'Neil.\footnote{\url{http://www.cs.umb.edu/~poneil/publist.html}}  We used a 
scale factor of 40 to generate it: that is, we used  command  
\texttt{dbgen -s 40 -T l}. 
It is 20\,GB and  
includes 240~million~rows. 
 The largest non-synthetic data set (Weather)
is 9\,GB\@.
It consists of 124 million surface synoptic weather reports from land stations for the 10-year period from December 1981 through November 1991. 
We also extracted a table from the US Census of 2000~\cite{uscensus2000file3} (henceforth USCensus2000). We used attributes 5 to 15  
from summary file~3. The resulting table has 37~million rows. 

The column cardinalities for Census1881 range from 138 to 152\,882 and from 2~to 99\,800~for Census-Income. Our Wikileaks table has column cardinalities 273, 1\,440, 3\,935 and 4\,865.
The SSB table has column cardinalities ranging from 1~to 
6\,084\,386. (The fact table has a column with a single value in
it: zero.)
For Weather, the column cardinalities range from 2 to 28\,767. 
Attribute cardinalities for USCensus2000 vary between 130\,001 to 534\,896.

For each data set, we give the suboptimality factor $\omega$ from Lemma~\ref{lemma:omegabound} in Table~\ref{tab:caractDataSet}. Wikileaks has
the lowest factor ($\omega=1.3$) followed by Census1881 ($\omega=2.9$)
whereas Census-Income has the largest one ($\omega=12.4$). Correspondingly, Wikileaks and Census1881 have the fewest columns (4 and 7), and Census-Income has
the most~(42).

We also provide a simple measure of the statistical dispersion of the frequency of the values.
 For column $i$, we find  
a 
most frequent value $v_i$, 
and we determine what fraction of this column's values are $v_i$.
Averaging the fractions, we have our measure $p_0 = \sum_{i=1}^c f(v_i)/nc$, where our table has $n$ rows and $c$ columns and
$f(v_i)$ is the number of times $v_i$ occurs in its column. 
As an example, consider the table in Fig.~\ref{fig:multiplylinkedlistexample-a}. In the first column, the value `6' is a most frequent value and it appears twice in 11~tuples. In the second column, the value `3' is most frequent, and it appears 4~times. In this example, we have $p_0 = \frac{2+4}{2\times 11}\approx 0.27 $.
In general, we have that $p_0$ ranges between 0 and 1. For uniformly distributed tables having high cardinality columns, the fraction $p_0$ is near zero. When there is high statistical dispersion of the frequencies, we expect that the most frequent values have high frequencies and so $p_0$ could be close to 1. One  
benefit of this measure is that it can be computed efficiently. By this measure, we have the highest statistical dispersion in Census-Income, Weather and USCensus2000.

\subsection{Implementation}
\label{sec:cpp-heuristic-details}
 
 Because we must be able to process
 large files in a reasonable time, we selected \textsc{Vortex}
 as one promising row-reordering heuristic.
We implemented \textsc{Vortex} in  memory-conscious manner: 
intermediate
tuples required for a comparison between rows are repeatedly
built on-the-fly.

\begin{figure}
\subfloat[Running time]{%
\includegraphics[width=0.47\textwidth]{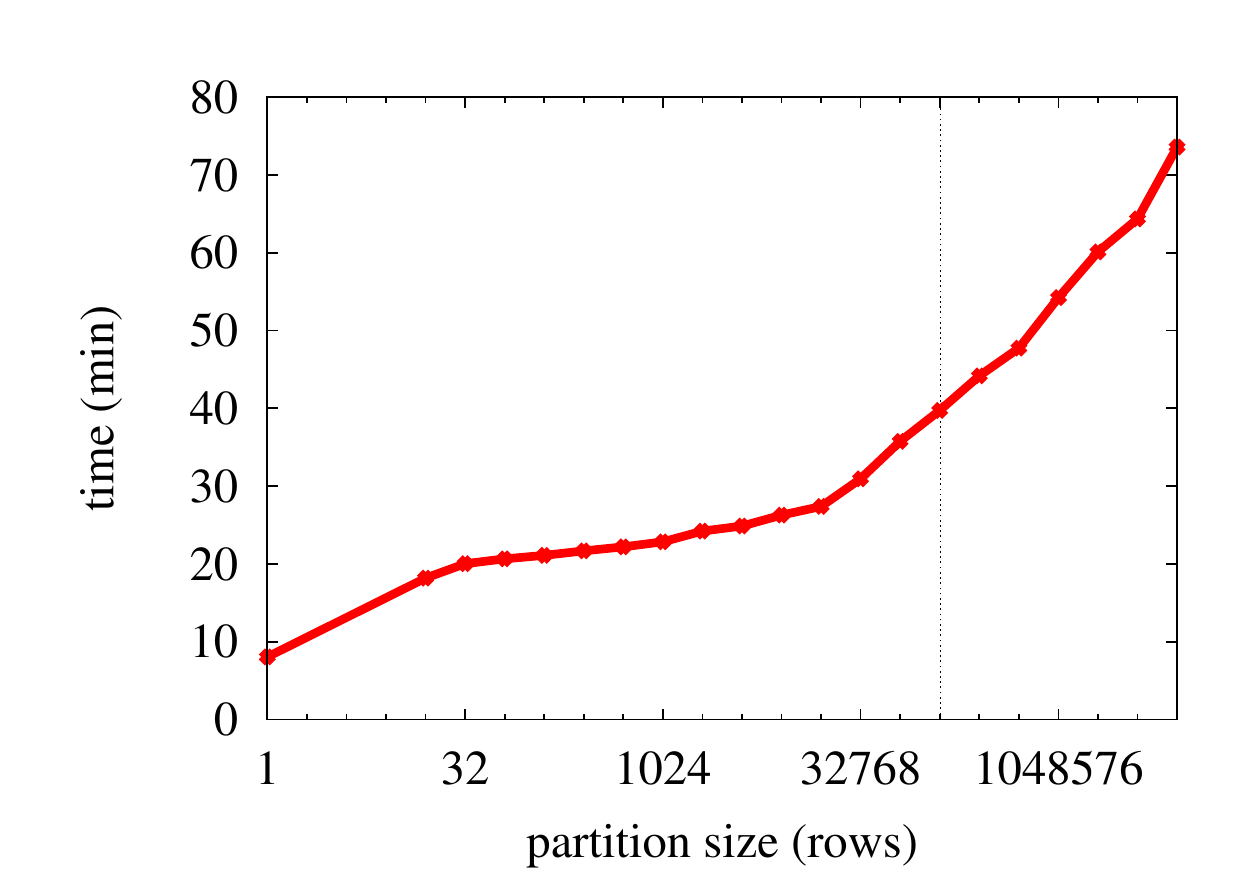}
}
\subfloat[Compressed data size]{%
\includegraphics[width=0.47\textwidth]{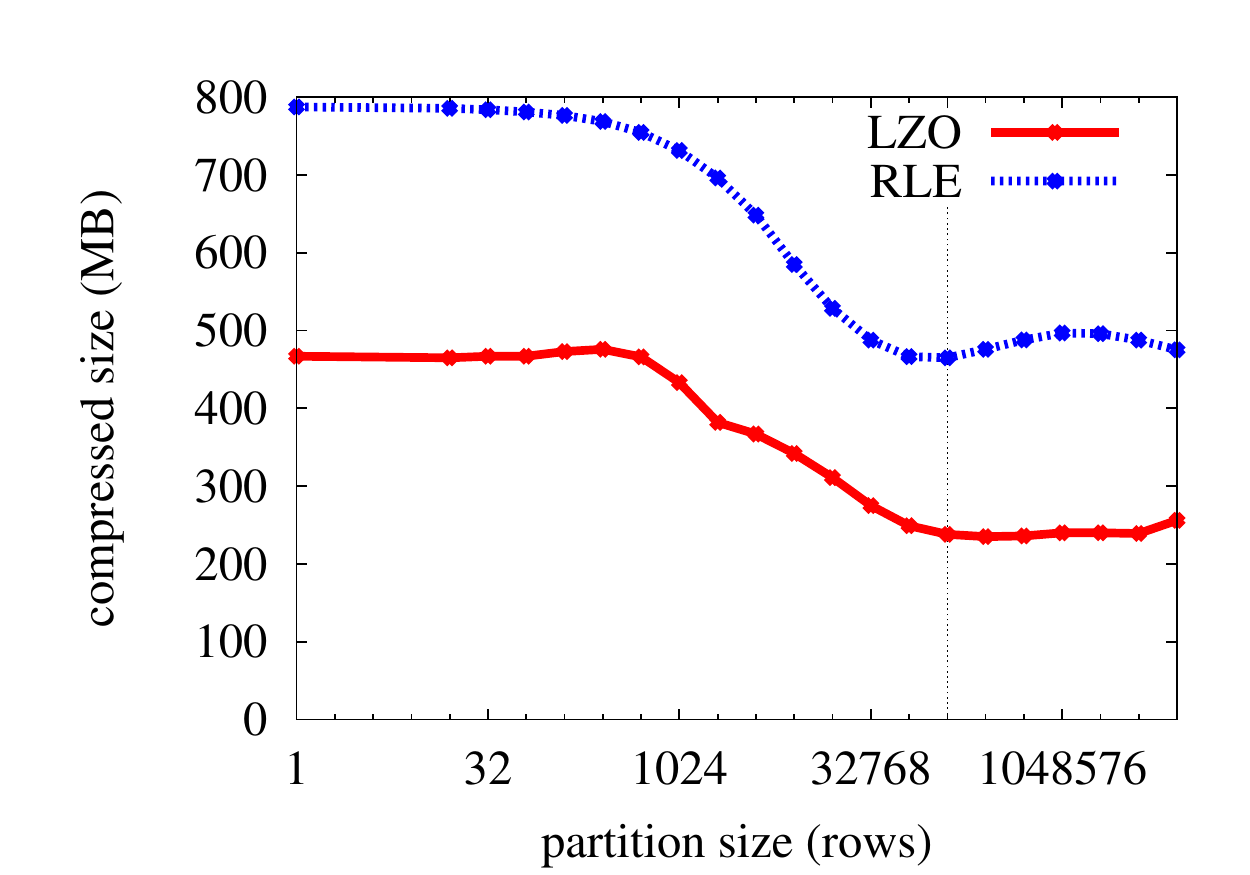}
}
\caption{\label{fig:scalingmultiplelist}The \textsc{Multiple Lists} heuristic applied on each partition of a lexicographically sorted  Weather table for various partition sizes. When partitions have size 1, 
we recover the lexicographical order. We refer to the case where the partition size is set to 131\,072~rows  as 
$\textsc{Multiple Lists}^{\star}$ (indicated by a dashed vertical line in the plots).}
\end{figure}
 
Prior to sorting lexicographically, we reorder columns in order of
non-decreasing cardinality~\cite{rlewithsorting}: in all cases this was
preferable to ordering the columns in decreasing cardinality. For \textsc{Vortex},
in all cases, there was no significant difference ($<1\%$) between ordering the columns in
increasing or decreasing order of cardinality. Effectively, \textsc{Vortex} does not
favor a particular column order.
 (A related property can be 
formally proved~\cite{LemireKaserGutarraTR12001}.)

  We used 
 \textsc{Multiple Lists} on partitions of
 a lexicographically sorted table (see \S~\ref{sec:horizontalpartition}).\footnote{When reporting the running time, we include the time required to sort the table lexicographically.} 
Fig.~\ref{fig:scalingmultiplelist} shows the effect of the
partition size on both the running time and the data compression.
Though larger partitions may improve compression, they also require more time. As a  default, we  chose  partitions of 131\,072~rows.
 Henceforth, we refer to this heuristic as 
$\textsc{Multiple Lists}^{\star}$.

We compiled our C++
software under GNU~GCC~4.5.2 using the -O3 flag.
The C++ source code is at \url{http://code.google.com/p/rowreorderingcpplibrary/}.
We ran our experiments on an Intel Core~i7~2600 processor with 16\,GB of RAM using Linux.
All data was stored on disk, before and after the compression.
 We used a 1\,TB SATA hard drive with an estimated sustained reading speed of 115\,MB/s. 
We used an external-memory
sort that we implemented in the conventional manner~\cite{KnuthV3E3}:
we load partitions of rows, which we sort in RAM and write back in the file;
we then merge these sorted partitions using a priority queue. 
Our code is sequential.

Our algorithms are scalable. Indeed, both \textsc{Vortex} and lexicographical sorting rely on the standard external-memory sorting algorithm. The only difference between \textsc{Vortex} and lexicographical sort is that the function used to compare tuples is different. This difference does not affect scalability with respect to the number of tuples even though it makes \textsc{Vortex} slower. As for 
$\textsc{Multiple Lists}^{\star}$, it relies on  external-memory sorting and the repeated application of \textsc{Multiple Lists} on 
fixed-sized 
blocks---both of which are scalable.


\subsection{Experimental results on realistic data sets}
\label{sec:runsonrealisticdatabaset}

We present the results in Table~\ref{table:owenstablereal}, 
giving
the compression ratio on top of the lexicographical order: e.g., a value
of two indicates that the compression ratio is twice as large as what we get with the lexicographical order.

\begin{table}
\tbl{\label{table:owenstablereal}Compression ratio compared to lexicographical sort }{%
\begin{tabular}{cc}
\subfloat[Census 1881]{%
\begin{tabular}{l|cc}
& \textsc{Vortex}	& $\textsc{Multiple Lists}^{\star}$	\\\hline
Sparse& \textbf{1.27} & 1.00 \\
Indirect&  \textbf{1.07} & \textbf{1.02} \\
Prefix&  \textbf{1.45} & 0.99 \\
LZO&  0.99 & \textbf{1.07} \\
RLE&  \textbf{1.02} & \textbf{1.13} \\
\textsc{RunCount} & 0.99 & \textbf{1.13} \\
\end{tabular}
}
&
\subfloat[Census-Income]{%
\begin{tabular}{l|cc}
& \textsc{Vortex}	& $\textsc{Multiple Lists}^{\star}$	\\\hline
 & 1.00 & 0.99 \\
&  \textbf{1.03} & 0.96 \\
& \textbf{1.15} & 0.95 \\
&  0.97 & \textbf{1.04} \\
&  0.97 & \textbf{1.24} \\
&  0.96 & \textbf{1.25} \\
\end{tabular}
}\\
\subfloat[Wikileaks]{%
\begin{tabular}{l|cc}
& \textsc{Vortex}	& $\textsc{Multiple Lists}^{\star}$	\\\hline
Sparse & \textbf{1.15} & 0.94 \\
Indirect &  0.63 & 0.96 \\
Prefix & \textbf{1.21} & 0.91 \\
LZO &  0.63 & \textbf{0.90} \\
RLE &  0.71 & \textbf{1.14} \\
\textsc{RunCount}& 0.70  & \textbf{1.14} \\
\end{tabular}
}&
\subfloat[SSB (DBGEN)]{%
\begin{tabular}{l|cc}
& \textsc{Vortex}	& $\textsc{Multiple Lists}^{\star}$	\\\hline
& \textbf{1.00} & 0.99 \\
&  \textbf{1.00} & 0.99 \\
&  \textbf{1.00} & 0.97 \\
&  \textbf{1.00} & \textbf{1.00} \\
&  \textbf{1.00} & \textbf{1.00} \\
 & \textbf{1.00} & \textbf{1.00} \\
\end{tabular}
}\\
\\ \subfloat[Weather]{%
\begin{tabular}{l|cc}
& \textsc{Vortex}	& $\textsc{Multiple Lists}^{\star}$	\\\hline
Sparse & 0.80 & \textbf{1.06} \\
Indirect &  0.78 & \textbf{1.55} \\
Prefix & 0.74 & \textbf{0.94} \\
LZO&  0.91 & \textbf{1.96} \\
RLE&  0.67 & \textbf{1.69} \\
\textsc{RunCount} &  0.66 & \textbf{1.67}\\
\end{tabular}
}&
\subfloat[USCensus2000]{%
\begin{tabular}{l|cc}
& \textsc{Vortex}	& $\textsc{Multiple Lists}^{\star}$	\\\hline
 & \textbf{1.09} & 1.00 \\
&  \textbf{1.72} & 1.06 \\
& \textbf{1.81} & 1.08 \\
&  0.26 & \textbf{0.94} \\
&  \textbf{3.08} & 1.15 \\
&  \textbf{3.04} & 1.15\\
%
%
%
%
%
%
%
%
\end{tabular}
}
\end{tabular}
}
\end{table}

Neither \textsc{Vortex}
nor $\textsc{Multiple Lists}^{\star}$ was able to improve
the compression ratio on the SSB data set. In fact, there was no change
(within 1\%) when replacing the lexicographical order with \textsc{Vortex}.
And $\textsc{Multiple Lists}^{\star}$ made things slightly worse (by 3\%)
for Prefix coding, but left other compression ratios unaffected (within 1\%).
To interpret this result,  consider that, while widely used, the DBGEN tool 
still generates synthetic data.  For example, out of 17~columns, seven have
almost perfectly uniform histograms.
Yet  ``real world data is not uniformly distributed''~\cite{Poess:2003:DCO:1315451.1315531}. 

For Census1881, the most remarkable result is that \textsc{Vortex}
was able to improve the compression under Sparse or Prefix coding by 27\% and 45\%.
For Census-Income,  $\textsc{Multiple Lists}^{\star}$ was able
to improve RLE compression by 25\%.
For Wikileaks, we found it interesting  that 
$\textsc{Multiple Lists}^{\star}$  reduced \textsc{RunCount} (and the RLE output)
by 14\% whereas the lexicographical order is already 1.26-optimal, which means
that $\textsc{Multiple Lists}^{\star}$  is 1.1-optimal in this case.
On Weather, the performance of \textsc{Vortex} was disappointing: it worsened
the compression in all cases. However, $\textsc{Multiple Lists}^{\star}$
had excellent results: it doubled the Lempel-Ziv (LZO) compression, and it improved
RLE compression by 70\%.  
Yet on the 
USCensus2000 data set,  \textsc{Vortex} was preferable to $\textsc{Multiple Lists}^{\star}$ for all compression schemes except LZO\@. We know that the lexicographical order is 3.4-optimal at reducing \textsc{RunCount}, yet \textsc{Vortex} is able to reduce \textsc{RunCount} by a factor of 3 compared to the lexicographical order. It follows that  \textsc{Vortex} is 1.1-optimal in this case.

Overall,  both \textsc{Vortex} and $\textsc{Multiple Lists}^{\star}$
can significantly improve over the lexicographic order when a 
database-compression technique is used on real data.  For every 
database-compression technique, significant improvement could be obtained by at least one
of the reordering heuristics on the real data sets.  However, significant degradation could
also be observed, and lexicographic order was best
in four realistic cases (Sparse coding on Census-Income, 
Indirect coding on  Wikileaks, Prefix  coding on Weather and LZO on USCensus2000).  
In \S~\ref{sec:guidance}, we propose to determine, based on characteristics of the data, whether significant gains are possible on top of the lexicographical order.

\subsubsection{Our row-reordering heuristics are scalable}
We present wall-clock timings in Table~\ref{table:owenstablerealspeed}
 to confirm our claims of scalability.
For this test, we included a variation of the SSB
where we used a scale factor of 100 instead of
40 when generating the data. That is, it is 2.5~times
larger (henceforth SSB $\times 2.5$). 
 As expected,
the lexicographical order is 
fastest,
whereas  $\textsc{Multiple Lists}^{\star}$
is slower than either \textsc{Vortex} or the lexicographical order. 
On the largest data set (SSB), \textsc{Vortex} and $\textsc{Multiple Lists}^{\star}$ were 
3~and
 4~times slower than lexicographical sorting.
One of the benefits of an approach based on partitions, such as 
$\textsc{Multiple Lists}^{\star}$,  is that one might stop early if benefits
are not apparent after a few partitions.
When comparing SSB and SSB $\times 2.5$, we see that the
running time grew by a factor of 4 for the lexicographical order, a factor of 2 for \textsc{Vortex} and a factor of 
 3 for $\textsc{Multiple Lists}^{\star}$.
  For SSB~$\times 2.5$, the running time of  $\textsc{Multiple Lists}^{\star}$ included 50\,min for sorting the table lexicographically, and the application of \textsc{Multiple Lists} on blocks of rows only took 104\,min. Because $\textsc{Multiple Lists}^{\star}$ 
uses blocks
with a fixed size, its running time will be eventually dominated by the time required to sort the table lexicographically as we increase the number of tuples.

\begin{table}\tbl{\label{table:owenstablerealspeed}Time necessary to reorder the rows}{%
\begin{tabular}{l|lll}%
& \textsc{Lexico.}&	\textsc{Vortex}& $\textsc{Multiple Lists}^{\star}$	\\\hline
Census1881 & $2.7$\,s & $14$\,s &  $19$\,s\\
Census-Income& $0.19$\,s& $4.7$\,s & $11$\,s   \\
Wikileaks &  $0.3$\,s &  $2.0$\,s&  $2.5$\,s \\
SSB& 12\,min & 35\,min & 52\,min \\
SSB $\times 2.5$ & 49\,min & 105\,min &  154\,min \\
Weather &  6\,min  & 26\,min & 43\,min\\
USCensus2000 & 33\,s & 3\,min & 12\,min\\
\end{tabular}}
\end{table}

\subsubsection{Better compression improves speed}
Everything else being equal, if less data needs to be loaded from RAM and disk to the CPU,  speed is improved.
It remains to assess whether improved compression can translate into better speed in practice. 
Thus,
we  
evaluated how fast we could uncompress all of the columns back into the original 32-bit dictionary values.
Our test was to retrieve the data from disk (with buffering) and store it back on disk. We report the ratio of the decompression time with lexicographical sorting over the decompression time with  alternative row reordering methods. Because the time required to write the decompressed values 
would have been 
unaffected by the compression, we would not expect speed gains exceeding 50\% with better compression in this kind of test.
\begin{itemize}
\item  First, we look at our good compression results on the Weather data set with $\textsc{Multiple Lists}^{\star}$ for the LZO and RLE (resp. 1.96 and 1.69 compression gain). The decompression speed was improved respectively by a factor of 1.19 and 1.14. 
\item  Second, we consider the USCensus2000 table, where \textsc{Vortex} improved both Prefix coding and RLE compression (resp. 1.81 and 3.04 compression gain). We saw gains to the decompression speed of 1.04 and 1.12.
\end{itemize}
These speed gains were on top of the gains already achieved by lexicographical sorting. For example, Prefix coding was only improved by 4\% compared to the lexicographical order on the USCensus2000 table, but if we compute ratios with respect to a shuffled table, they went from 1.25 for  lexicographical sorting
to 1.30 with \textsc{Vortex}. Hence, the total performance gain due to row reordering is 30\%.

\subsection{Guidance on selecting the row-reordering heuristic}
\label{sec:guidance}

It is difficult to determine
which row-reordering heuristic is best given a table and a compression scheme. 
Our processing
techniques are already fast, and useful guidance would need to be obtained faster---probably limiting us to decisions
using summaries such as those  maintained by the DBMS\@. And such concise summaries might be insufficient:
\begin{mylongitem}
\item Suppose that we are given a set of columns and complete knowledge of their histograms. That is, we have the list of attribute values and their frequencies. Unfortunately, even given all this data, we could not predict the efficiency of the row reordering techniques reliably. Indeed, consider the USCensus2000 data set. 
According to Table~\ref{table:owenstablereal}, \textsc{Vortex} improves RLE compression by a factor of 3 over the lexicographical order. Consider what happened when we took the same table (USCensus2000) and randomly shuffled columns, independently. The column histograms were not changed---only the relationships between columns were affected. Yet, not only did \textsc{Vortex} fail to improve RLE compression over this newly generated table, it made it much worse (from a ratio of 3.04 to 0.74). The performance of $\textsc{Multiple Lists}^{\star}$ was also adversely affected: while it  slightly improves the compression by Prefix coding (1.08) over the original USCensus2000 table, it made compression worse (0.93) over the reshuffled USCensus2000 table. 
\item Perhaps one might hope to predict the efficiency of row-reordering techniques by using small samples, without ever sorting the entire table. There are reasons again to be pessimistic.  We took a random sample of 65\,536~tuples from the USCensus2000 table. Over such a sample, \textsc{Vortex} improved LZO compression by 2.5\% compared to the lexicographical order, whereas over the whole data set \textsc{Vortex} makes LZO much worse than the lexicographical order (1.025 versus 0.26). Similarly, whereas \textsc{Vortex} improves RLE by  a factor of 3 when applied over the whole table, the gain was far more modest over our sample (1.06 versus 3.04).
\end{mylongitem}

However, we can offer some guidance. For compression schemes that 
are closely related to \textsc{RunCount}, such as RLE, the optimality of a lexicographic
sort should be computed using Lemma~\ref{lemma:omegabound}. If $\omega \approx 1$, we can
safely conclude that the lexicographical order is sufficient.

Moreover, our results on synthetic data sets (\S~\ref{sec:experiments}) suggest that some statistical dispersion in the frequencies of the values is necessary. Indeed, we could not improve the \textsc{RunCount} of tables having uniformly distributed columns  
even when $\omega$ were relatively large.
On our real data sets, 
we got the best compression gains compared to the lexicographical order with the Weather and USCensus2000 tables. They both have high $p_0$ values (0.36 and 0.54). 

Hence, we propose to  only try better row-reordering heuristics when $\omega$ and $p_0$ are large (e.g., $\omega>3$ and $p_0>0.3$). Both measures can be computed efficiently. 

 Furthermore, when applying a scheme
such as $\textsc{Multiple Lists}$ on partitions of the sorted table, it would be reasonable to
stop the heuristic after a few partitions if there is no benefit. For example, consider the Weather data and  $\textsc{Multiple Lists}^{\star}$. After 20~blocks of 131\,072~tuples, we have a promising gain of 1.6 for LZO and RLE, but a disappointing ratio of 0.96 for Prefix coding. That is, we have valid estimates (within 10\%) of the actual gain over the whole data set after processing only 2\% of the table.


\section{Conclusion}

For the TSP under the Hamming distance, lexicographical sort is an
effective and natural heuristic. It appears to be easier to
surpass the lexicographical sort when the column histograms have
high statistical dispersion (e.g. Zipfian distributions).

Our original question was whether engineers willing to spend extra time reordering rows could improve
the compressibility of their tables, at least by a modest amount. 
Our answer is positive.
\begin{itemize}
\item Over real data, $\textsc{Multiple Lists}^{\star}$ always improved
RLE compression when compared to the lexicographical order  (10\% to 70\% better). 
\item \textsc{Vortex} almost always improved Prefix coding compression, sometimes
by a large percentage (80\%) compared to the lexicographical order.
\item On one data set, \textsc{Vortex} improved RLE compression by a factor of 3 compared to lexicographical order.
\end{itemize}  

As far as heuristics are concerned, we have certainly not exhausted the possibilities.
Several tour-improvement heuristics used to solve the TSP~\cite{johnsonmcgeoch1997}  could
be adapted for row reordering. 
Maybe more importantly, we could adapt the TSP heuristics using a different distance measure
than the Hamming distance. For example, consider difference coding~\cite{moffat2000binary,db2luw2009,Anh:2010:ICU:1712666.1712668} where the successive differences
between attribute values are coded. In this case, we could use 
an inter-row distance that
measures the number of bits required to code the differences. 
Just as importantly, the implementations of our row-reordering
heuristics are sequential: parallel versions could be faster, especially
on multicore processors.
\bibliographystyle{acmsmall} 




\received{M Y}{M Y}{M Y} 

\end{document}